\date{}
\title{Beating $1-\frac 1 e$ for Ordered Prophets}
\newcommand{\E}{\mathbb{E}}
\def\zeit{\number\shorthour:\ifnum\shortminute<10 0\number\shortminute
\else\number\shortminute\fi}
\newenvironment{proof}{\noindent{\bf Proof : \ }}{\hfill$\Box$\par\medskip}
\newtheorem{theorem}{Theorem}
\newtheorem{lemma}[theorem]{Lemma}
\newtheorem{definition}[theorem]{Definition}
\newenvironment{proofof}[1]{\begin{trivlist} \item {\bf Proof
#1:~~}}
  {\qed\end{trivlist}}
\renewenvironment{proofof}[1]{\par\medskip\noindent{\bf Proof of #1: \ }}{\hfill$\Box$\par\medskip}
\newcommand*\samethanks[1][\value{footnote}]{\footnotemark[#1]}
\author{
	Melika Abolhasani \samethanks[1]
	\and
	Soheil Ehsani \thanks{Department of Computer Science, University of Maryland, College Park, MD 20742 USA. Email: \texttt{\{melika,ehsani,hossein,hajiagha\}@cs.umd.edu}. Supported in part by NSF CAREER award CCF-1053605, NSF BIGDATA grant IIS-1546108, NSF AF:Medium grant CCF-1161365, DARPA GRAPHS/AFOSR grant FA9550-12-1-0423, and another DARPA SIMPLEX grant.} \thanks{Portions of this research were completed while the authors were visitors at the Simons Institute for the Theory of Computing.}
	\and
	Hossein Esfandiari \samethanks[1]
	\and
	MohammadTaghi HajiAghayi \samethanks[1] \samethanks[2]
	\and
	Robert Kleinberg \thanks{Department of Computer Science, Cornell University, Ithaca, NY 14850 USA. Email: {\tt rdk@cs.cornell.edu}. Supported in part by NSF Award CCF-1512964. This research were initiated while the author was a researcher at Microsoft Research New England, and portions of it were completed while the author was a visitor at the Simons Institute for the Theory of Computing.}
	\and
	Brendan Lucier
}
\begin{document}

\maketitle

\begin{abstract}
	Hill and Kertz studied the \emph{prophet inequality} on iid distributions [\textit{The Annals of Probability 1982}]. They proved a theoretical bound of $1 - \frac 1 e$ on the approximation factor of their algorithm. They conjectured that the best approximation factor for arbitrarily large $n$ is $\frac{1}{1+1/e}\simeq 0.731$. This conjecture remained open prior to this paper for over 30 years.
	
	In this paper we present a threshold-based algorithm for the prophet inequality with $n$ iid distributions. Using a nontrivial and novel approach we show that our algorithm is a $0.738$-approximation algorithm. By beating the bound of $\frac{1}{1+1/e}$, this refutes the conjecture of Hill and Kertz.
	
	Moreover, we generalize our results to non-iid distributions and discuss its applications in mechanism design.
\end{abstract}


\section{Introduction}
Online auctions play a major role in modern markets. In online markets, information about customers and goods is revealed over time.  Irrevocable decisions are made at certain discrete times, such as when a customer arrives to the market. One of the fundamental and basic tools to model this scenario is the \emph{prophet inequality} and its variants.

In a prophet inequality instance we are given a sequence of distributions. Iteratively, we draw a value from one of the distributions, based on a predefined order. In each step we face two choices, either we accept the value and stop, or we reject the value and move to the next distribution. The goal in this problem is to maximize the expected value of the item selected. We say an algorithm for a prophet inequality instance is an $\alpha$-approximation, for $\alpha \leq 1$, if the expectation of the value picked by the algorithm is at least $\alpha$ times that of an optimum solution which knows all of the values in advance.

Prophet inequalities were first studied in the 1970's by Krengel and Sucheston~\cite{kennedy1987prophet,KS77,KS78}. Hajiaghayi, Kleinberg and Sandholm~\cite{HKS07} studied the relation between online auctions and prophet inequalities. In particular they showed that algorithms used in the derivation of prophet inequalities can be reinterpreted as truthful mechanisms for online auctions. Later Chawla, Hartline, Malec, and Sivan~\cite{CHMS10} used prophet inequalities to design sequential posted price mechanisms whose revenue approximates that of the Bayesian optimal mechanism.

In the classical definition of the prophet inequality, the values can be drawn from their distributions in an arbitrary (a.k.a.\ \emph{adversarial} or \emph{worst}) order.  Assuming an adversarial order, the problem has a $0.5$ approximation algorithm which is tight. Recently, Yan~\cite{DBLP:conf/soda/Yan11} considered a relaxed version of this problem in which the algorithm designer is allowed to pick the order of distributions (a.k.a.\ \emph{best order}), and provided a $1-\frac 1 e$ approximation algorithm for this problem. Later, Esfandiari, Hajiaghayi, Liaghat and Monemizadeh \cite{esfandiari2015prophet} showed that there exists a $1-\frac 1 e$ approximation algorithm even when the distributions arrive in a \emph{random order}. Both results provided by Yan and Esfandiari et al.\ are not tight. 

In this work we consider prophet inequalities in both best order and random order settings and take steps towards providing tight approximation algorithms for these problems. Particularly, we consider this problem assuming a large market assumption (i.e.\ we have several copies of each distribution). Indeed, the large market assumption is well-motivated in this context~\cite{buchbinder-jain-naor,devanur-hayes,esfandiari2015online,MSVV,mirrokni2012simultaneous}.

\subsection{Our Contribution} 
First we consider the prophet inequality on a set of \emph{identical and independent distributions} (\emph{iid}). The prophet inequality on iid distributions has been previously studied by Hill and Kertz~\cite{hill1982comparisons} in the 1980's.
Hill and Kertz provided an algorithm based on complicated recursive functions. They proved a theoretical bound of $1 - \frac 1 e$ on the approximation factor of their algorithm, and used a computer program to show that their algorithm is a $0.745$-approximation when the number of distributions is $n=10000$. They conjectured that the best approximation factor for arbitrarily large $n$ is $\frac{1}{1+1/e}\simeq 0.731$. This conjecture remained open for more than three decades. 

In this paper we present a simple threshold-based algorithm for the prophet inequality with $n$ iid distributions. Using a nontrivial and novel approach we show that our algorithm is a $0.738$-approximation algorithm for large enough $n$, beating the bound of $\frac{1}{1+1/e}$ conjectured by Hill and Kertz. This is the first algorithm which is theoretically proved to have an approximation factor better than $1-\frac 1 e$ for this problem. Indeed, beating the $1-\frac 1 e$ barrier is a substantial work in this area~\cite{feige2006approximation,feldman2009online}. The following theorem states our claim formally.

\begin{theorem} \label{intro:theorem:iidResult}
	There exists a constant number $n_0$ such that for every $n\geq n_0$, there exists a  $0.738$-approximation algorithm for any prophet inequality instance with $n$ iid distributions.
\end{theorem}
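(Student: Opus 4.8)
The plan is to analyze a single-threshold algorithm: pick one threshold $\tau$ (scaled appropriately relative to $n$), and accept the first value that exceeds $\tau$. The natural choice is to set $\tau$ so that the probability that at least one of the $n$ iid draws exceeds $\tau$ equals some constant $p$; equivalently, if $q = \Pr[X \le \tau]$, then $q^n = 1-p$, so $q = (1-p)^{1/n}$. The algorithm's expected value is then $\tau \cdot \Pr[\text{some draw exceeds }\tau] + \E[\text{overshoot above }\tau \mid \text{some draw exceeds}]$, and crucially in the large-$n$ limit the overshoot term is controlled by the expected number of draws exceeding $\tau$, which is $n(1-q) \to -\ln(1-p) = \ln\frac{1}{1-p}$, a constant. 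So the whole analysis reduces to a one-variable optimization over $p$ (or, more refined, a careful choice of $\tau$ as a function of the distribution's upper tail).

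**First I would** reduce to a canonical form. By scaling, normalize so that $\E[\max_i X_i] = 1$ (or normalize the threshold). The key structural fact, which I would establish first, is an upper bound on $\mathrm{OPT} = \E[\max_i X_i]$ in terms of the threshold $\tau$ and the tail behavior: $\E[\max_i X_i] \le \tau + \E[(\max_i X_i - \tau)^+] \le \tau + \sum_i \E[(X_i - \tau)^+] = \tau + n\,\E[(X-\tau)^+]$. Writing $G(t) = \Pr[X > t]$, we have $n\,\E[(X-\tau)^+] = n\int_\tau^\infty G(t)\,dt$. Meanwhile the algorithm's value is at least $\tau(1 - q^n) + $ (a contribution from the overshoot). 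The hard part of the bookkeeping is to lower-bound the overshoot contribution of the algorithm correctly: when the algorithm stops, it stops at the \emph{first} index exceeding $\tau$, and the expected overshoot at that point is $\E[(X-\tau)^+]/(1-q) \cdot$ (appropriately weighted), which one must compare against the $n\int_\tau^\infty G$ term appearing in $\mathrm{OPT}$.

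**The hard part will be** handling the regime where the tail $G$ is "spread out" — i.e., where $n\int_\tau^\infty G(t)\,dt$ is large because $\max_i X_i$ frequently exceeds $\tau$ by a lot. In that case a single fixed $\tau$ with constant acceptance probability is too crude, and one must choose $\tau$ adaptively as a function of the distribution: roughly, set $\tau$ at the point where the expected number of exceedances $n\,G(\tau)$ equals an optimized constant $c^\star$, and then argue that the overshoot collected by the algorithm is a constant fraction of the overshoot mass available to the prophet. I expect the analysis to split into cases according to whether the distribution's tail above $\tau$ is "light" (most of $\mathrm{OPT}$'s advantage comes from $\tau$ itself, where the bound is easy) or "heavy" (most of the advantage is overshoot, where one leverages that with $\Theta(1)$ expected exceedances the algorithm and the prophet both collect a comparable share of the same overshoot). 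Optimizing the worst case over this family of distributions — likely via a factor-revealing LP or a calculus argument over the one or two free parameters — should yield the constant $0.738 > \frac{1}{1+1/e}$; pinning down that the worst case is attained in the limit $n\to\infty$ (hence the "$n \ge n_0$" in the statement) and that the constant genuinely exceeds $1/(1+1/e)$ rather than merely matching it is the delicate numerical crux, and presumably is where the "nontrivial and novel approach" advertised in the introduction does its real work.
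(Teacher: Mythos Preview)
Your proposal has a genuine gap: a single-threshold algorithm cannot achieve $0.738$. Here is a concrete obstruction. Take $X\in\{1-1/e,\,1\}$ with $\Pr[X=1]=1/n$. Then $\E[\max_i X_i]\to(1-1/e)(1+1/e)$. A threshold $\tau\le 1-1/e$ accepts $X_1$ and earns $\E[X_1]\to 1-1/e$; a threshold $\tau\in(1-1/e,1]$ waits for the value $1$ and earns $1\cdot(1-(1-1/n)^n)\to 1-1/e$. Either way the algorithm's value is $1-1/e$, so the ratio is $\tfrac{1}{1+1/e}\approx 0.731<0.738$. (A time-varying rule---wait for the value $1$ through round $n-1$, then accept $X_n$ unconditionally---achieves ratio $\to 1$ on this same instance.) More generally, the decomposition you write down, $\mathrm{OPT}\le\tau+n\,\E[(X-\tau)^+]$ versus $\mathrm{ALG}\ge(1-q^n)\tau+\tfrac{1-q^n}{1-q}\,\E[(X-\tau)^+]$, optimizes to exactly $1-1/e$: with $nG(\tau)\to c$ the two coefficients become $1-e^{-c}$ and $(1-e^{-c})/c$, whose minimum is maximized at $c=1$. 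No factor-revealing LP over the single parameter $c$ will escape this.

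The paper's route is fundamentally different. It uses a \emph{decreasing} sequence of thresholds $\theta_1>\cdots>\theta_n$, encoded by a ``stopping-probability curve'' $h:[0,1]\to[0,1]$ via $\Pr[\tau>i]=h(i/n)$, and proves a \emph{pointwise} stochastic-domination bound $\Pr[X_\tau\ge x]\ge\alpha\,\Pr[\max_i X_i\ge x]$ for every $x\ge 0$. They isolate three sufficient conditions on $h$ (``$\alpha$-strong''): $h(1)\le 1-\alpha$, $\int_0^1 h\ge\alpha$, and the differential inequality
\[
1-h(s)-\frac{h'(s)}{h(s)}\int_s^1 h(r)\,dr \;\ge\; \alpha\Bigl(1-\exp\bigl(h'(s)/h(s)\bigr)\Bigr)\qquad\text{for all }s\in[0,1].
\]
A single threshold corresponds to $h(s)=e^{-cs}$; for this choice the first two conditions alone already force $\alpha\le 1-1/e$. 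The paper's contribution is the specific choice $h(s)=\cos(as)$ with $a\approx 1.306$ the root of $\cos a+(\sin a)/a=1$, which they verify is $\alpha$-strong for $\alpha=1-\cos a\approx 0.7388$; the verification of the third condition reduces to a one-variable inequality handled analytically. The ``hard part'' you correctly anticipate---trading off the chance of stopping against the quality of what you stop on---is exactly what the time-varying threshold resolves: start high to capture rare large values, then decrease so as not to leave empty-handed. A single $\tau$ cannot play both roles.
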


Next, we extend our results to support different distributions. However, we assume that we have several copies of each distribution. This can be reinterpreted as a large market assumption. 
We say a multiset of independent distributions $\{F_1,\ldots,F_n\}$ is \emph{$m$-frequent} if for each distribution $F_i$ in this multiset there are at least $m$ copies of this distribution in the multiset.
We show that by allowing the algorithm to pick the order of the distributions, there exists a $0.738$-approximation algorithm for any prophet inequality
instance on a set of $m$-frequent distributions, for large enough $m$.
The following theorem states this fact formally.

\begin{theorem}\label{intro:theorem:bestOrder}
	There exists a constant number $m_0$ such that there exits a $0.738$-approximation best order algorithm for any prophet inequality instance on a set of $m_0$-frequent distributions.
\end{theorem}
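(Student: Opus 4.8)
The plan is to reduce Theorem~\ref{intro:theorem:bestOrder} to the iid guarantee of Theorem~\ref{intro:theorem:iidResult} by passing to \emph{quantile space}. Observe first that any threshold rule can be rephrased without reference to the underlying distributions: instead of ``accept the first value exceeding $\tau$,'' one fixes, for each position $t$, a quantile budget $\beta_t \in [0,1]$ and accepts the value at position $t$ iff it lies above the $(1-\beta_t)$-quantile of the distribution placed at that position. On an iid instance this is exactly the algorithm behind Theorem~\ref{intro:theorem:iidResult}, but the reformulation makes sense for arbitrary distributions, and---crucially---the probability of surviving to position $t$ equals $\prod_{s<t}(1-\beta_s)$, which depends only on the budgets, not on the order of the distributions or their identities. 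Hence the only order-sensitive quantity is the conditional value collected upon stopping, which we are free to optimize because we choose the order.

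The second step uses $m$-frequency to ``smooth out'' each distinct distribution. If a type $F_i$ has multiplicity $m_i \ge m_0$, we place its copies in one contiguous block and split whatever total quantile mass we wish to devote to that type evenly across the $m_i$ copies; since $m_0 \to \infty$, each individual budget is $o(1)$, so by a Poissonization/concentration argument the block behaves, up to vanishing error, like a single ``continuous'' stopping opportunity whose value distribution is that of $F_i$ conditioned to exceed a tunable quantile. Ordering the $k$ blocks by a decreasing ``value level'' and choosing the per-block masses exactly as the algorithm of Theorem~\ref{intro:theorem:iidResult} prescribes along a long iid stream, one gets that $\E[\mathrm{ALG}]$ of the resulting best-order algorithm equals, up to $o(1)$, the quantity that the analysis of Theorem~\ref{intro:theorem:iidResult} already lower bounds by $0.738\cdot\mathrm{OPT}$, where $\mathrm{OPT}=\E[\max_{i,j}X_{i,j}]=\int_0^\infty (1-\prod_i F_i(x)^{m_i})\,dx$.

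The main obstacle is the last claim: one must show that, when the designer picks the order, best-order $m$-frequent instances are \emph{no harder} than iid instances, i.e.\ that the factor-revealing bound driving Theorem~\ref{intro:theorem:iidResult} survives the smoothing step. After the quantile reformulation, the performance ratio becomes a functional of the multiset of tail functions $x \mapsto 1-F_i(x)$ (counted with multiplicity), and the crux is an exchange/monotonicity argument showing that this functional, minimized subject to the relevant normalization, is attained when all the tails coincide---which is precisely the iid case already handled. A secondary and more routine obstacle is controlling the $o(1)$ errors introduced by the continuum approximation and by the finite-$n$ slack in Theorem~\ref{intro:theorem:iidResult}; taking the constant $m_0$ large enough, as permitted by the statement, absorbs all of them.
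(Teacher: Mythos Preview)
Your proposal takes a genuinely different route from the paper, and the route you sketch has a real gap at the load-bearing step.

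\textbf{What the paper does.} The paper never argues that ``iid is the worst case'' among $m$-frequent instances. Instead it \emph{manufactures} an iid instance by interleaving. Given an $m$-frequent set, it arranges the items into $s$ consecutive blocks, each block containing (after discarding a negligible fraction) exactly one copy of every distinct distribution. The maximum over a block then has CDF $F(x)=\prod_i F_i(x)$, and the $s$ blocks are iid with this CDF. It runs Algorithm~\ref{alg:iid} on these $s$ iid ``super-items,'' using the same threshold $\theta_j$ for every item inside block $j$. The work is in Lemma~\ref{lemma:partition}, which shows that stopping item-by-item inside a block with threshold $\theta_j$ is, up to a $(1-\epsilon)$ factor, as good as stopping on the single super-item $Y_j=\max$ of that block; equality of the optima is immediate since $\E[\max_j Y_j]=\E[\max_i X_i]$. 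No order optimization over types, no exchange argument, no worst-case claim is needed.

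\textbf{Where your proposal breaks.} You place each type in its own contiguous block and then assert that ``an exchange/monotonicity argument'' shows the resulting performance functional is minimized when all tails coincide. This is the entire content of the theorem, and you have not supplied the argument. It is far from clear that it is true in the form you need: the analysis behind Theorem~\ref{intro:theorem:iidResult} compares $\Pr[X_\tau\ge x]$ to $\Pr[\max_i X_i\ge x]$ pointwise in $x$, and both sides depend delicately on \emph{which} distribution sits at each position, not just on the product $\prod_i F_i$. Reusing the iid quantile schedule on heterogeneous blocks does not reproduce the iid lower bound, because the conditional value upon stopping in block $i$ is governed by $F_i$, not by the common $F$ the schedule was tuned for. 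Without a concrete inequality replacing the handwave, the reduction does not close. The paper's interleaving trick sidesteps this entirely by making the blocks literally identically distributed.
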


Our next theorem shows that even in the random order setting one can achieve a $0.738$-approximation algorithm on $m$-frequent distributions, for large enough $m$.

\begin{theorem}\label{intro:theorem:randOrder}
	There exists a constant number $c_0$ such that there exits a $0.738$-approximation random order algorithm for any prophet inequality instance on a set of $(c_0\log(n))$-frequent distributions.
\end{theorem}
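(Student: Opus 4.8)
\section*{Proof proposal for Theorem~\ref{intro:theorem:randOrder}}

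The plan is to reduce the random-order problem on $(c_0\log n)$-frequent distributions to the best-order result of Theorem~\ref{intro:theorem:bestOrder}, exploiting the fact that a uniformly random arrival order on a highly frequent instance is, with high probability, ``balanced''. Let $\mathcal D$ denote the input multiset, with $n$ distinct distribution types and $N$ items in total, and note that the benchmark $\E[\max_i X_i]$ is the same for every arrival order. The analysis behind Theorem~\ref{intro:theorem:bestOrder} controls, at each point in time, the acceptance probability and the conditional contribution of a time-dependent threshold, and it does so only through the \emph{profile} of the sub-instance still to arrive --- the proportions in which the $n$ types are represented among the remaining items. For a random permutation, after a $t$-fraction of the $N$ items has passed, the remaining items form --- with high probability --- a $(1\pm\epsilon)$-proportional copy of $\mathcal D$; hence a random order ``looks like'' the ideal balanced order at every time $t$, and the same threshold schedule can be reused.

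Concretely, fix a small $\epsilon>0$, set $K=\Theta(1/\epsilon)$, and split the $N$ positions into $K$ consecutive segments $S_1,\dots,S_K$ of $\approx N/K$ positions. The random-order algorithm uses a single threshold $\tau_j$ throughout segment $S_j$, where $\tau_1\ge\cdots\ge\tau_K$ is the scale-$1/K$ piecewise-constant discretization of the threshold schedule that Theorem~\ref{intro:theorem:bestOrder} prescribes for $\mathcal D$. Using a constant threshold inside each segment is precisely what makes the argument robust: it renders the order of items \emph{within} a segment irrelevant, so that the only randomness that matters is which types land in which segment.

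Next I would establish concentration of those counts. For a fixed type $i$ and segment $S_j$, the number of copies of type $i$ falling in $S_j$ is hypergeometric with mean at least $c_0\log n/K$; a Chernoff/Serfling bound gives that it deviates from its mean by more than a $(1\pm\epsilon)$ factor with probability at most $n^{-\Omega(\epsilon^2 c_0/K)}$. A union bound over the at most $nK$ pairs $(i,j)$ shows that, apart from a ``bad'' event $\mathcal B$ of probability at most $\epsilon$ (taking $c_0$ a sufficiently large constant depending only on $\epsilon$, chosen large enough to also cover small $n$), every segment is a $(1\pm\epsilon)$-proportional copy of $\mathcal D$. On $\mathcal B^c$, the sequence of remaining sub-instances seen by the algorithm is a $(1\pm O(\epsilon))$ multiplicative, type-by-type perturbation of the ideal sequence, so replaying the analysis of Theorem~\ref{intro:theorem:bestOrder} gives an accepted value of at least $(0.738+\delta-O(\epsilon))\,\E[\max_i X_i]$, where $\delta>0$ is the gap between the true constant proved there and the stated $0.738$. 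Finally, $\mathcal B$ is a function of the permutation alone and is therefore independent of the realized values, so the value lost on $\mathcal B$ is at most $\Pr[\mathcal B]\cdot\E[\max_i X_i]\le\epsilon\,\E[\max_i X_i]$. Choosing $\epsilon$ so small that $\delta-O(\epsilon)-\epsilon>0$ and $c_0$ correspondingly large finishes the proof.

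The part that genuinely needs work, rather than being a mechanical translation, is the quantitative robustness of the analysis of Theorem~\ref{intro:theorem:bestOrder}: I must verify that a $(1\pm\epsilon)$ perturbation of the type proportions of every remaining sub-instance changes the acceptance probabilities and conditional expectations appearing in that analysis by only $O(\epsilon)$, \emph{uniformly over time}, so that the guaranteed ratio degrades continuously --- which is exactly where the strict slack $\delta$ above $0.738$ is spent. Everything else (the hypergeometric tail bound, the union bound over the $nK$ events, and the independence of $\mathcal B$ from the values) is standard.
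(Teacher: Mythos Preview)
Your high-level strategy---split the random permutation into $K$ consecutive segments, show via hypergeometric concentration and a union bound that each segment is a $(1\pm\epsilon)$-scaled copy of the full instance, then reuse the constant-per-segment threshold schedule from the best-order result---is exactly the route the paper takes. The one substantive difference is in how the $(1\pm\epsilon)$ imbalance is handled. You propose to prove \emph{quantitative robustness}: that the acceptance probabilities and conditional expectations in the best-order analysis degrade only by $O(\epsilon)$ under a multiplicative perturbation of type counts, uniformly in time. You correctly flag this as the step that ``genuinely needs work.'' The paper sidesteps it entirely: once concentration guarantees at least $(1-\epsilon)m/s$ copies of each type in every segment, it simply \emph{discards} the excess so that each segment contains exactly $(1-\epsilon)m/s$ of every type, producing an exactly $s$-partitioned sequence. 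A short lemma (their Lemma~\ref{lemma:discard}) shows that discarding an $\epsilon$-fraction of each type reduces $\E[\max X_i]$ by at most a $(1-\epsilon)$ factor, and then the partitioned result (Lemma~\ref{lemma:partition}, which underlies Theorem~\ref{intro:theorem:bestOrder}) applies verbatim. This is cleaner: it replaces your robustness argument with a two-line monotonicity observation.

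A minor notational point: you write ``$n$ distinct distribution types,'' but in the paper $n$ is the total number of items; the number of types is at most $n/m$, and this smaller count is what enters the union bound.
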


To conclude the presentation of our results we show that it is not possible to extend our results to the worst order setting. The following theorem states this fact formally.

\begin{theorem} \label{theorem:hardness}
	For any arbitrary $m$, there is a prophet inequality instance on a set of $m$-frequent distributions such that the instance does not admit any $0.5+\epsilon$-approximation worst order  algorithm.
\end{theorem}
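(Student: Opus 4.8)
The goal is to show that the worst-order (adversarial) prophet inequality remains hard even when each distribution appears with multiplicity $m$. The obvious obstruction to beating $\tfrac12$ in the adversarial setting is the classical two-point construction: one ``small'' deterministic value that always arrives, followed by a rare ``large'' value, tuned so that an online algorithm cannot tell whether to commit early. The job here is to embed that construction into an $m$-frequent instance without the extra copies giving the algorithm any additional power. The plan is to use the multiplicity purely to \emph{dilute} the high value across the copies, so that in aggregate the copies of the ``large'' distribution still realize a big value with constant probability, but each individual copy is essentially always $0$.

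Concretely, I would take $n = m+1$ distributions: $m$ copies of a distribution $F$ that is $0$ with probability $1-\tfrac{1}{m}$ and equal to $1$ with probability $\tfrac1m$, together with one deterministic distribution at a carefully chosen value $v \in (0,1)$. Actually, to get a clean $m$-frequent instance one should make both groups have multiplicity $m$: $m$ copies of the ``high'' distribution $F$ and $m$ copies of a ``low'' deterministic distribution at value $\varepsilon$ (or, to be fully rigorous, $m$ copies of a distribution tightly concentrated near $\varepsilon$, so that it is a genuine distribution and not degenerate — this detail doesn't matter for the argument). The adversary presents all $m$ low copies first, then all $m$ high copies. The prophet's value is $\mathbb{E}[\max]$, which is at least $\Pr[\text{some high copy is }1] = 1 - (1-\tfrac1m)^m \to 1 - \tfrac1e$, so $\mathrm{OPT} \ge 1-\tfrac1e - o(1)$, and in fact $\mathrm{OPT}\ge 1$ can be forced by scaling the high value up; the point is $\mathrm{OPT}$ is bounded below by a constant $c>0$. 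The algorithm, on the other hand, faces the following dilemma: while processing the low copies it has no information (they are all $\approx\varepsilon$), so it either stops on a low copy (getting $\approx\varepsilon$) or commits to waiting; once in the high phase, the first high copy it is willing to accept is $1$ only with probability $\tfrac1m$, and conditioning on reaching later high copies the situation is unchanged by independence. A short computation shows the algorithm's expected reward is at most $\varepsilon + (\text{its probability of ever seeing a }1) \le \varepsilon + (1-(1-\tfrac1m)^m)$, but the real point is the standard balancing: making $\varepsilon$ small kills the low-phase payoff, and in the high phase the best the algorithm can do is essentially take the first $1$ it sees, which is worth at most $\tfrac12 \mathrm{OPT} + o(1)$ by the usual single-threshold prophet bound — the multiplicity $m$ contributes nothing because all copies are iid.

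The cleanest route to the exact factor $\tfrac12+\varepsilon$: scale so that the prophet's expectation is exactly $1$, and show every algorithm earns at most $\tfrac12 + o_m(1)$; then choose $m$ large enough (but the statement wants this for \emph{every} $m$, so instead pick the high-value scaling and $\varepsilon$ as functions of $m$ and $\epsilon$ to drive the ratio below $\tfrac12+\epsilon$ for that fixed $m$). I would set the high value to $h = h(m)$ and low value $\ell = \ell(m)$, write $\mathrm{ALG} \le \ell + h\cdot\Pr[\text{ALG reaches high phase and accepts a nonzero high copy}]$, bound $\Pr[\text{prophet gets }h] = 1-(1-\tfrac1m)^m$ from below by a constant, and observe the algorithm's acceptance probability of a nonzero high is at most the same quantity times a factor coming from \emph{when} it chooses to start accepting — which, by independence, cannot exceed $1$. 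Balancing gives $\mathrm{ALG}/\mathrm{OPT} \le \tfrac12 + (\text{lower-order terms that we make} <\epsilon)$. The routine part is choosing the two scalars to make ``lower-order'' genuinely small for the given fixed $m$; an alternative that avoids the $m$-dependence is to note that any $m$-frequent instance is in particular a prophet instance, and the classical adversarial lower bound of $\tfrac12$ (Krengel–Sucheston, cited as \cite{KS77,KS78}) already exhibits a hard instance, which we need only thicken into an $m$-frequent one by the dilution trick above.

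**Main obstacle.** The delicate point is ensuring that the $m$ extra copies of the high distribution truly confer no advantage: a priori, an algorithm could adaptively use information revealed by earlier high copies (all being $0$) to decide whether a later high copy is worth accepting. The resolution is that, by independence and the memorylessness of the two-point distribution, conditioning on ``the first $k$ high copies were $0$'' does not change the distribution of the remaining copies, so an optimal policy in the high phase is a fixed threshold and its success probability is at most $1-(1-\tfrac1m)^{m} < 1-\tfrac1e+o(1)$ — never more than the prophet's own probability of cashing in. Making this conditioning argument fully rigorous (handling the interleaving with the low phase, and the exact arithmetic of the balancing) is the only real work; everything else is the standard $\tfrac12$-hardness template.
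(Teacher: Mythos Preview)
Your framework is the same as the paper's: two groups of $m$ iid distributions each, a deterministic ``low'' group presented first and a rare ``high'' group second, so the algorithm must commit before learning whether a high value materializes. But your concrete instantiation does not hit $\tfrac12$. With high copies equal to $1$ with probability $1/m$ and $0$ otherwise, the probability that at least one high copy is nonzero is $p=1-(1-1/m)^m\approx 1-1/e$; with the low value $\ell$ tiny, the prophet earns essentially $p$ and the waiting algorithm also earns $p$, for a ratio near $1$. Even after the correct balancing $\ell=hp$, the ratio is $1/(2-p)\approx 0.73$, not $\tfrac12+\epsilon$. The quantity you must drive to zero is $p$, not $\ell$.

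The paper's choice makes this immediate: take the $m$ low copies deterministically equal to $1$ and the $m$ high copies equal to $1/\epsilon$ with probability $1-(1-\epsilon)^{1/m}$ (and $0$ otherwise), so that the aggregate high-hit probability is exactly $p=\epsilon$. Then $\mathrm{OPT}=(1-\epsilon)\cdot 1+\epsilon\cdot\tfrac1\epsilon=2-\epsilon$, while any online policy gets at most $\max\bigl(1,\ \epsilon\cdot\tfrac1\epsilon\bigr)=1$, giving ratio at most $1/(2-\epsilon)\le\tfrac12+\epsilon$, for every $m$. The ``main obstacle'' you flag---adaptive exploitation of earlier high zeros---is, as you yourself observe, a non-issue by independence: the optimal high-phase policy is just ``accept the first nonzero,'' worth exactly $hp$. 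The piece actually missing from your plan was the realization that the per-copy high probability must be pushed down to order $\epsilon/m$ (not $1/m$) so that $p=\epsilon$, and correspondingly that the low value must be comparable to $hp$ (here, $1$) rather than negligible.
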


\subsection{Applications in Mechanism Design}
The prophet inequality has numerous applications in 
mechanism design and optimal search theory, so our improved
prophet inequality for $m$-frequent distributions 
has applications in those areas as well. By way of 
illustration, we present here an application to 
optimal search theory. In Weitzman's~\cite{weitzman79}
``box problem'', there are $n$ boxes containing
indepedent random prizes, $v_1,\ldots,v_n$, 
whose distributions are not necessarily identical. 
The cost of opening box $i$ is $c_i \geq 0$. A decision
maker is allowed to open any number of boxes, 
after which she
is allowed to claim the largest prize among the open
boxes. The costs of the boxes, and the distributions 
of the prizes inside, are initially known to the
decision maker, but the value $v_i$ itself is 
only revealed when box $i$ is opened. A search policy
is a (potentially adaptive) rule for deciding
which box to open next---or whether to stop---given
the set of boxes that have already been opened and
the values of the prizes inside. Weitzman~\cite{weitzman79}
derived the structure of the optimal search policy,
which turns out to be wonderfully simple: one computes
an ``option value'' $\sigma_i$ for each box $i$, 
satisfying the equation $E[\max\{0,v_i-\sigma_i\}]=c_i$.
Boxes are opened in order of decreasing $\sigma_i$ 
until there is some open box $i$ such that $v_i > \sigma_j$
for every remaining closed box $j$, then the policy stops.
Kleinberg, Waggoner, and Weyl~\cite{KWW-16} presented
an alternative proof of this result which works by 
relating any instance of the box problem to a modified
instance in which opening boxes is cost-free, but the 
prize in box $i$ is $\min\{v_i,\sigma_i\}$ rather than
$v_i$. The proof shows that when we run any policy on
the modified instance, its net value 
(prize minus combined cost) weakly improves, and that
the net value is preserved if the policy is {\em non-exposed},
meaning that whenever it opens a box with $v_i > \sigma_i$,
it always claims the prize inside. 

An interesting variant of the box problem arises if one
constrains the decision maker, upon stopping, to choose 
the prize in the most recently opened box, rather than
the maximum prize observed thus far. In other words, upon
opening box $i$ the decision maker must irrevocably decide
whether to end the search and claim prize $v_i$, or 
continue the search and relinquish $v_i$. Let us call
this variant the {\em impatient box problem}. It could
be interpreted as modeling, for example, the decision
problem that an employer faces when scheduling a sequence
of costly job interviews in a labor market where hiring 
decisions must be made immediately after the interview.
The factor $1 - \frac 1 e$ prophet inequality of Yan
and Esfandiari et al.\ implies that if the decision maker
is allowed to choose the order in which to inspect boxes
(or even if a random order is used), the net value of the 
optimal impatient box problem policy is at least 
$1 - \frac 1 e$ times the net value of the optimal
policy for the corresponding instance of the original
(non-impatient) box problem; for the proof of this implication,
see Corollary 3 and Remark 1 in~\cite{KWW-16}. A consequence
of Theorem~\ref{intro:theorem:bestOrder} above is that 
this ratio improves to $0.738$ if the instance of the 
impatient box problem contains sufficiently many copies
of each type of box.

Our results also have applications to a recent line of work that employs prophet inequalities to design posted-price mechanisms.
In the standard posted-price setup, a seller has a collection of resources to distribute among $n$ buyers.  The buyers' values are drawn independently from distributions that are known in advance to the seller.  The seller can use this distributional knowledge to set a (possibly adaptive) price on the goods for sale.  Buyers then arrive sequentially and make utility-maximizing purchases.  Hajiaghayi et al.~\cite{HKS07} noted the close connection between this problem and the prophet inequality, with the price corresponding to a choice of threshold.  This has immediate implications for designing prices for welfare maximization, and one can additionally obtain bounds for revenue by applying the prophet inequality to virtual welfare~\cite{ChawlaHK07,CHMS10}.  There has subsequently been a significant line of work extending this connection to derive posted-price mechanisms for broader classes of allocation problems, such as matroid constraints~\cite{kleinberg2012matroid}, multi-item auctions~\cite{A11,CHMS10} and combinatorial auctions~\cite{FeldmanGL15}.
The result of Yan and Esfandiari et al.~\cite{esfandiari2015prophet} implies that for the original case of a single item for sale, if the seller is allowed to choose the order in which the buyers arrive (or if they can be assumed to arrive in random order), then a posted-price mechanism can obtain expected welfare that is at least
$1 - \frac 1 e$ times the expected welfare of the optimal assignment.  Theorem~\ref{intro:theorem:bestOrder} implies that this ratio improves to $0.738$ if the pool of buyers contains sufficiently many individuals whose values are drawn from the same distribution.

\subsection{Other Related Work}
The first generalization of the prophet inequality is the \emph{multiple-choice prophet inequality} ~\cite{kennedy1987prophet,kennedy1985optimal,kertz1986comparison}. In the multiple-choice prophet inequality we are allowed to pick $k$ values, and the goal is to maximize the total sum of picked values. Alaei~\cite{A11} gives an almost tight $(1-\frac{1}{\sqrt{k+3}})$-approximation algorithm for  the $k$-choice prophet inequality (the lower bound is proved in Hajiaghayi, Kleinberg, and Sandholm~\cite{HKS07}).

Prophet inequalities have been studied under complicated combinatorial structures such as matroid, polymatroid, and matching. Kleinberg and Weinberg~\cite{kleinberg2012matroid} consider matroid prophet inequalities, in which the set of selected values should be an independent set of a predefined matroid. They give a tight $0.5$-approximation worst order algorithm for this problem. Later, D\"{u}tting and Kleinberg extended this result to polymatroids~\cite{dutting2015polymatroid}.

Alaei, Hajiaghayi, and Liaghat study matching prophet inequalities~\cite{alaei2011adcell,alaei2013online,DBLP:conf/sigecom/AlaeiHL12}. They extend the multiple-choice prophet inequality and give an almost tight $(1-\frac{1}{\sqrt{k+3}})$-approximation worst order algorithm for any matching prophet inequality instance, where $k$ is the minimum capacity of a vertex. 

Rubinstein considers the prophet inequalities restricted to an arbitrary downward-closed set system~\cite{rubinstein2016beyond}. He provides an $O(\log n \log r)$-approximation algorithm for this problem, where $n$ is the number of distributions and $r$ is the size of the largest feasible set. Babaioff, Immorlica and Kleinberg show a lower bound of $\Omega(\frac{\log n}{\log\log n)} )$ for this problem~\cite{babaioff2007matroids}.
Prophet inequalities has also been studied on many classic problems in graphs~\cite{gobel2014online, garg2008stochastic, dehghani2015online, dehghani2016stochastic}.

\section{IID Distributions} \label{section:iid}
In this section we give a 0.738-approximation algorithm for prophet inequality with iid items. Let us begin with some definitions. Assume that $X_1,\ldots,X_n$ are iid random variables with common distribution function $F$. For simplicity, assume that $F$ is continuous and strictly increasing on a subinterval of $\mathcal{R}^{\geq 0}$. An algorithm based on a sequence of thresholds $\theta_1,\ldots, \theta_n$ is the one that selects the first item $k$ such that $X_k\geq \theta_k$.

Let $\tau$ denote the stopping time of this algorithm, where $\tau$ is $n+1$ when the algorithm selects no item. For simplicity suppose $X_{n+1}$ is a zero random variable. The approximation factor of an algorithm based on $\theta_1\ldots,\theta_n$ is defined as $E[X_\tau]/E[\max X_i]$. This factor captures the ratio between what a player achieves in expectation by acting based on these thresholds and what a prophet achieves in expectation by knowing all $X_i$'s in advance and taking the maximum of them.

In Algorithm \ref{alg:iid} we presents a simple oblivious algorithm for every $n$ and distribution function $F$. Theorem \ref{theorem:iidResult} proves that this algorithm is at least 0.738-approximation for large enough number of items.

\begin{algorithm}
	\normalsize
	\caption{}
	\textbf{Input: $n$ iid items with distribution function $F$.}
	
	\begin{algorithmic}[1] \label{alg:iid}
		\STATE Set $a$ to 1.306 (root of $\cos(a)-\sin(a)/a-1$).
		\STATE Set $\theta_i=F^{-1}(\cos(ai/n)/\cos(a(i-1)/n))$.
		\STATE Pick the first item $i$ for which $X_i\geq \theta_i$.
	\end{algorithmic}
\end{algorithm}

\begin{theorem} \label{theorem:iidResult}
	For every $\epsilon>0$ there exists a number $n_\epsilon$ (a function of $\epsilon$ and independent of $n$) such that for every $n\geq n_\epsilon$ Algorithm \ref{alg:iid} for $n$ items is at least $(1-\epsilon)\alpha$-approximation where $\alpha=1-\cos(a) \approx 0.7388$.
\end{theorem}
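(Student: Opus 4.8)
The plan is to pass to the quantile function $g := F^{-1}$, express both quantities in the ratio as linear functionals of $g$, and then exploit the monotonicity of $g$ to reduce everything to a one-parameter inequality. Write $\phi_i := ai/n$ and $q_i := F(\theta_i) = \cos\phi_i/\cos\phi_{i-1}$ (well defined and in $(0,1)$ since $a<\pi/2$). Because $F$ is continuous, $q_i$ is exactly the probability that the algorithm rejects at step $i$ conditioned on reaching it, so the product telescopes: $\Pr[\tau \ge i] = \prod_{j<i}q_j = \cos\phi_{i-1}$. Conditioning on the stopping step and changing variables, $\E[X_\tau] = \sum_{i=1}^n \cos\phi_{i-1}\,\E\!\big[X\,\mathbf 1[X\ge\theta_i]\big] = \sum_{i=1}^n \cos\phi_{i-1}\int_{q_i}^1 g(u)\,du = \int_0^1 g(u)\,N(u)\,du$, where $N(u) := \sum_{i=1}^n \cos\phi_{i-1}\,\mathbf 1[u\ge q_i]$; and by Fubini $\E[\max_i X_i] = \int_0^\infty (1-F(t)^n)\,dt = \int_0^1 n u^{n-1} g(u)\,du$. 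Hence the theorem is equivalent to $\int_0^1 g(u)\big(N(u) - (1-\epsilon)\alpha\, n u^{n-1}\big)\,du \ge 0$ for every nonnegative nondecreasing $g$.

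Next I would use that any nonnegative nondecreasing $g$ equals $g(0)$ plus a nonnegative mixture of the indicators $u\mapsto \mathbf 1[u\ge v]$ (its Lebesgue--Stieltjes representation $g(u)=g(0)+\mu((0,u])$), so it suffices to verify the inequality when $g$ is such an indicator, i.e.\ to show for every $v\in[0,1)$ that
\[
 \sum_{i=1}^n \cos\phi_{i-1}\,\big(1-\max(v,q_i)\big) \;\ge\; (1-\epsilon)\,\alpha\,(1-v^n).
\]
This is essentially one-dimensional. The $q_i$ decrease in $i$ and $q_n = \cos a/\cos\!\big(a\tfrac{n-1}{n}\big) = 1-\tfrac a n\tan a + O(1/n^2)$, so for $v\le q_n$ the left side telescopes exactly to $\sum_i(\cos\phi_{i-1}-\cos\phi_i)=1-\cos a=\alpha$, which dominates the right side. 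The remaining range is $v=1-c/n$ with $c\in(0,\,a\tan a+o(1))$; there $1-\max(v,q_i)=\tfrac1n\min\!\big(c,\,n(1-q_i)\big)$ with $n(1-q_i)=a\tan\phi_i+O(1/n)$, so the left side is a Riemann sum converging, uniformly in $c$ on compacts, to
\[
 M(c) := \frac1a\int_0^a \cos\phi\,\min(c,\,a\tan\phi)\,d\phi = 1+\frac{c\sin a}{a}-\frac{\sqrt{a^2+c^2}}{a},
\]
while $1-v^n\to 1-e^{-c}$. Thus the crux is the scalar inequality $M(c)\ge\alpha(1-e^{-c})$ for $c\in[0,a\tan a]$.

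To prove the scalar inequality I would study $\psi(c):=M(c)-\alpha(1-e^{-c})$. One has $\psi(0)=0$; $\psi'(0)=\tfrac{\sin a}{a}-\alpha=0$, which is precisely the defining property of $a$ (equivalently $\tfrac{\sin a}{a}=1-\cos a=\alpha$, i.e.\ $a\tan(a/2)=1$); $\psi''(0)=\alpha-\tfrac1{a^2}>0$; and $\psi(a\tan a)=\alpha-\alpha(1-e^{-a\tan a})=\alpha e^{-a\tan a}>0$. A short analysis of $\psi$---conveniently, after the substitution $c=at$, of the equivalent form $\tfrac{\sin a}{a}e^{-at}\ge 2\sqrt{1+t^2}\,\sin^2\!\big(\tfrac{a-\arctan t}{2}\big)$ on $[0,\tan a]$, which is tight exactly at $t=0$---then yields $\psi\ge 0$ throughout, with equality only at $c=0$. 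This is also what pins down the constant: $\alpha=1-\cos a$ is forced by the first-order tightness $\psi'(0)=0$, so no better bound is available from this threshold family, and the worst instances are continuous approximations of the two-point law ``$0$ with probability $1-c/n$, huge otherwise'' as $c\to 0$.

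The one genuinely delicate point---and the step I expect to be the main obstacle---is uniformity as $v\to 1$ (equivalently $c\to 0$), where $1-v^n$ itself tends to $0$ so the $O(1/n)$ Riemann-sum error is no longer negligible against the right side. Here I would split the range of $c$: for $c\le n(1-q_1)=n(1-\cos(a/n))=\Theta(1/n)$ every term equals $c/n$, so the left side is exactly $\tfrac cn\sum_{j=0}^{n-1}\cos(aj/n)\ge c\cdot\tfrac1a\int_0^a\cos=c\alpha$ (a left Riemann sum of the decreasing function $\cos$ overestimates its integral), which already beats $(1-\epsilon)\alpha(1-v^n)\le(1-\epsilon)\alpha c$; for $n(1-q_1)\le c\le c_0$ with $c_0$ a small fixed constant, a routine error estimate together with $\psi(c)\ge 0$ and $1-v^n\ge 1-e^{-c}\ge c/2$ shows the $\tfrac{\epsilon\alpha}{2}c$ slack absorbs the $O(c/n)$ error once $n\ge n_\epsilon$; and for $c_0\le c\le n(1-q_n)\to a\tan a$ the denominator is bounded away from $0$ and the uniform convergence of the left side divided by $1-v^n$ to $M(c)/(1-e^{-c})\ge\alpha$ finishes it. Combining the cases gives $\E[X_\tau]\ge(1-\epsilon)\alpha\,\E[\max_i X_i]$ for all $n\ge n_\epsilon$. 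The conceptual core is the reduction to the pointwise-in-$v$ inequality and the scalar inequality $M(c)\ge\alpha(1-e^{-c})$; I expect the uniformity bookkeeping near $v\to 1$, rather than any single estimate, to be where the real work lies.
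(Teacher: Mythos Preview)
Your approach is correct and lands on the same two pillars as the paper---a pointwise comparison $\Pr[X_\tau\ge x]\ge(1-\epsilon)\alpha\,\Pr[\max_i X_i\ge x]$ and a one-variable scalar inequality---but the packaging is genuinely more direct. The paper routes through an abstract framework (``$\alpha$-strong'' threshold functions, Theorem~\ref{Theorem:alphaStrong}) and needs a separate Lemma~\ref{lemma:OPTupperbound} to compare $1-(1-G(x))^n$ with $1-e^{-nG(x)}$; by switching to the quantile parameterization $v=F(x)$, $c=n(1-v)$, you get $1-v^n\le c$ and $1-v^n\ge 1-e^{-c}$ for free and bypass that lemma entirely. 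Your scalar inequality $M(c)\ge\alpha(1-e^{-c})$ is literally the same as the paper's third $\alpha$-strong property specialized to $h(s)=\cos(as)$: under $c=a\tan(as)$ one has $M(c)=1-\cos(as)+\tan(as)(\sin a-\sin(as))$, and the paper proves the inequality (after squaring) as Lemma~\ref{lemma:aw} via counting zeros of $A,A',A'',A'''$, whereas you only sketch it (``a short analysis of $\psi$''); that step is the one place where a full argument is still owed. Your uniformity bookkeeping near $c\to 0$ mirrors the paper's split in the proof of Lemma~\ref{lemma:algLowerbound} (there the split is on $s(x)\le s_0$ versus $s(x)>s_0$), and your observation that $|\cos\phi_j-\cos\phi^*|$ is $O(c/n)$ rather than merely $O(1/n)$---because both angles are $O(c)$ and $\cos$ is flat near $0$---is exactly what makes the $\epsilon\alpha c/2$ slack absorb the Riemann error in your middle regime. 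In short: same skeleton, but your quantile route trims one lemma and the $\alpha$-strong abstraction at the cost of being tied to the specific $h(s)=\cos(as)$, while the paper's framework would let one swap in other threshold curves.
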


In the following we walk you through the steps of the design of Algorithm \ref{alg:iid} and provide a proof for Theorem \ref{theorem:iidResult}. For a given sequence of thresholds let $q_0,q_1,\ldots,q_n$ denote the probability of the algorithm not choosing any of the first items. More specifically, let $q_i=Pr[\tau>i]$ for every $0\leq i \leq n$. Knowing the thresholds $\theta_1,\ldots,\theta_n$ one can find this sequence by starting from $q_0=1$ and computing the rest using $q_{i}=q_{i-1} F(\theta_{i})$. Inversely, one can simply find the thresholds from $q_1,\ldots,q_n$ using $\theta_i=F^{-1}(q_i/q_{i-1})$. Hence, we focus the design of our algorithm on finding the sequence $q_1,\ldots,q_n$. To this end, we aim to find a continuous function $h:[0,1]\rightarrow[0,1]$ with $h(0)=1$ such that by setting $q_i=h(i/n)$ we can achieve our desired set of thresholds.

Note that such a function $h$ has to meet certain requirements. For instance, it has to be strictly decreasing, because at every step the algorithm picks an item with some positive probability, therefore $h(i/n)=q_i=Pr[\tau>i]$ is smaller for larger $i$. In the following we define a class of functions which has two additional properties. We prove that these properties can be useful in designing a useful threshold algorithm.

\begin{definition}
	A continuous and strictly decreasing function $h:[0,1]\rightarrow[0,1]$ with $h(0)=1$ is a threshold function if it has the following two properties:
	\begin{enumerate}[i.]
		\item $h$ is a strictly concave function.
		\item For every $\epsilon>0$ there exists some $\delta_0\leq \epsilon$ such that for every $\delta\leq\delta_0$ and $\enspace\epsilon+\delta\leq s\leq 1$:$$\enspace \frac{h'(s-\delta)}{h(s-\delta)}\leq(1-\epsilon)\frac{h'(s)}{h(s)}\enspace .$$
	\end{enumerate}
\end{definition}

As shown in the following lemma, the first property leads to a decreasing sequence of thresholds. Also, we exploit the second property to show that the approximation factor of $h$ improves by increasing the number of items.

\begin{lemma} \label{lemma:decThresholds}
	If $h$ is a threshold function, then the sequence of thresholds $\theta_1,\ldots,\theta_n$ achieved from $h$ is decreasing.
\end{lemma}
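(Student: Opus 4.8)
The plan is to show that $\theta_i \geq \theta_{i+1}$ for every $i$, which by the identity $\theta_i = F^{-1}(q_i/q_{i-1}) = F^{-1}(h(i/n)/h((i-1)/n))$ and the monotonicity of $F^{-1}$ reduces to the purely analytic claim that the ratio $h(i/n)/h((i-1)/n)$ is non-increasing in $i$. Writing $x = (i-1)/n$ and $\delta = 1/n$, it therefore suffices to prove that for a threshold function $h$ one has
\[
  \frac{h(x+\delta)}{h(x)} \;\geq\; \frac{h(x+2\delta)}{h(x+\delta)}
  \qquad\text{for all } x \geq 0 \text{ with } x+2\delta \leq 1,
\]
or equivalently $h(x+\delta)^2 \geq h(x)\,h(x+2\delta)$, i.e.\ $\log h$ is midpoint-concave (hence concave) along the arithmetic progression $x, x+\delta, x+2\delta$.

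The key step is to deduce log-concavity of $h$ from the two defining properties. I would argue as follows. Since $h$ is continuous, strictly decreasing, and strictly concave, it is differentiable almost everywhere with a decreasing derivative $h' < 0$; in fact I would work with the one-sided derivatives, which exist everywhere by concavity. The function $g(s) := h'(s)/h(s)$ is exactly $(\log h)'(s)$. Property (ii), applied with $s$ and $s-\delta$ in the relevant range, gives $g(s-\delta) \le (1-\epsilon)\,g(s)$; since $g(s) = h'(s)/h(s) < 0$ (numerator negative, denominator positive), multiplying the negative quantity $g(s)$ by $(1-\epsilon) < 1$ makes it \emph{larger} (closer to $0$), so this says $g(s-\delta) \le (1-\epsilon) g(s) < g(s)$ whenever $\delta$ is small enough. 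Letting $\epsilon \to 0$ along the allowed range, this shows $g$ is (at least) non-increasing, i.e.\ $(\log h)'$ is non-increasing, so $\log h$ is concave on $(0,1]$. Concavity of $\log h$ immediately yields the midpoint inequality $\log h(x+\delta) \geq \tfrac12(\log h(x) + \log h(x+2\delta))$ for every valid $x,\delta$, which is precisely the inequality needed above.

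A small amount of care is required near the left endpoint: Property~(ii) only controls $g$ on the interval $[\epsilon+\delta, 1]$, so strictly speaking I first obtain monotonicity of $g$ (equivalently concavity of $\log h$) on every interval $[\eta, 1]$ with $\eta > 0$, and then conclude concavity of $\log h$ on all of $(0,1]$ by taking $\eta \to 0$; this handles the step $i=1$ (where $x=0$) by a limiting/continuity argument using $h(0)=1$ and continuity of $h$. I expect this boundary bookkeeping to be the only real subtlety — the heart of the argument, translating Property~(ii) into ``$h'/h$ is non-increasing,'' is short once one is careful about the sign of $h'/h$, and everything else is the routine reduction via $\theta_i = F^{-1}(h(i/n)/h((i-1)/n))$ together with monotonicity of $F^{-1}$.
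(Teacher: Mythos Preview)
Your reduction is exactly right and matches the paper: since $\theta_i = F^{-1}\!\bigl(h(i/n)/h((i-1)/n)\bigr)$ and $F^{-1}$ is increasing, it suffices to show the midpoint log-concavity inequality $h(x)^2 \ge h(x-\delta)\,h(x+\delta)$.

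Where you go wrong is in the derivation of log-concavity from property~(ii). You write that $g(s-\delta)\le(1-\epsilon)g(s)<g(s)$, but you had just (correctly) noted that multiplying the negative number $g(s)$ by $1-\epsilon$ makes it \emph{larger}, i.e.\ $(1-\epsilon)g(s)>g(s)$. So property~(ii) only says $g(s-\delta)\le(1-\epsilon)g(s)$ with $(1-\epsilon)g(s)>g(s)$; this places $g(s-\delta)$ somewhere in $(-\infty,(1-\epsilon)g(s)]$ and tells you nothing about the sign of $g(s-\delta)-g(s)$. In particular it does \emph{not} give $g(s-\delta)\ge g(s)$, which is what ``$g$ non-increasing'' means. (Taking $\epsilon\to 0$ does not help either, since $\delta\le\delta_0\le\epsilon$ forces $\delta\to 0$ simultaneously and the statement degenerates to $g(s)\le g(s)$.) Property~(ii) is a \emph{lower} bound on $|g(s-\delta)|$ in terms of $|g(s)|$, not the upper bound you would need.

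The paper's proof avoids property~(ii) entirely and uses only property~(i). Strict concavity of $h$ gives $h(x)>\tfrac12\bigl(h(x-\delta)+h(x+\delta)\bigr)$; squaring and subtracting $\bigl(\tfrac{h(x+\delta)-h(x-\delta)}{2}\bigr)^2$ from both sides yields $h(x)^2>h(x-\delta)h(x+\delta)$ directly. Equivalently, any positive concave function is automatically log-concave (by AM--GM applied to the midpoint inequality), so once you have reduced to log-concavity you are already done from property~(i) alone.
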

\begin{proof}
	For every $1\leq i\leq n$ we have $\theta_i=F^{-1}(q_i/q_{i-1})$. Since every $q_i=h(i/n)$, we have $\theta_i=F^{-1}(\frac{h(i/n)}{h((i-1)/n)})$. Note that $F$ is a strictly increasing function, therefore having $\theta_i>\theta_{i+1}$ requires $\frac{h(i/n)}{h((i-1)/n)}>\frac{h((i+1)/n)}{h(i/n)}$. For simplicity let $x=i/n$ and $\delta=1/n$. From the first property of threshold functions we have:
	$$h(x) > \frac{h(x+\delta)+h(x-\delta)}{2}\enspace .$$
	By raising both sides to the power of 2, and subtracting $(h(x+\delta)/2-h(x-\delta)/2)^2$ from each side we have:
	\begin{align*}
		h(x)^2-\bigg(\frac{h(x+\delta)-h(x-\delta)}{2}\bigg)^2 &> \bigg(\frac{h(x+\delta)+h(x-\delta)}{2}\bigg)^2 \\ &- \bigg(\frac{h(x+\delta)-h(x-\delta)}{2}\bigg)^2\\
		&=h(x+\delta)h(x-\delta)\enspace .
	\end{align*}	
	Therefore $h(x)^2 > h(x+\delta)h(x-\delta)$, which means $h(x)/h(x-\delta)>h(x+\delta)/h(x)$ and the proof is complete.
\end{proof}

Next, we define a class of functions and prove for every function of this class that its approximation factor approaches $\alpha$ for a large enough $n$. This enables us to narrow down our search for a efficient function $h$.

\begin{definition}
	A threshold function $h$ is $\alpha$-strong if it has the following properties:
	\begin{enumerate}[i.]
		\item $h(1)\leq 1-\alpha$.
		\item $\int_{0}^{1}h(r)dr\geq \alpha$.
		\item $\forall\enspace 0\leq s\leq 1:1-h(s)-\frac{h'(s)}{h(s)}\int_{s}^{1}h(r)dr\geq \alpha(1-\exp(\frac{h'(s)}{h(s)}))\enspace .$
	\end{enumerate}	
\end{definition}

The following theorem formally states our claim for $\alpha$-strong functions.

\begin{theorem} \label{Theorem:alphaStrong}
	If $h$ is an $\alpha$-strong function, then for every $\epsilon>0$ there exists an $n_\epsilon$ such that for every $n\geq n_\epsilon$ the threshold algorithm that acts based on $h$ is at least $(1-\epsilon)\alpha$-approximation on $n$ iid items.
\end{theorem}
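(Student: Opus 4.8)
The plan is to recast the claim as a family of one-variable inequalities that do not involve $F$ at all, and to verify those by a case analysis in which each of the three defining properties of an $\alpha$-strong function is used in one regime. Write $X_i=F^{-1}(U_i)$ with the $U_i$ independent and uniform on $[0,1]$, so that $\{X_i\ge\theta_i\}=\{U_i\ge\rho_i\}$ where $\rho_i:=q_i/q_{i-1}=h(i/n)/h((i-1)/n)$ depends only on $h$. Since $\{\tau\ge i\}$ is determined by $U_1,\dots,U_{i-1}$ and has probability $q_{i-1}=h((i-1)/n)$, independence gives $E[X_\tau]=\sum_{i=1}^n q_{i-1}\int_{\rho_i}^1 F^{-1}(u)\,du$ and $E[\max_i X_i]=\int_0^1 F^{-1}(u)\,nu^{n-1}\,du$. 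Because $F^{-1}$ is nonnegative and nondecreasing on $[0,1]$ it is a nonnegative mixture of the indicators $\mathbf 1[u\ge t]$, so by Tonelli it is enough to show, for every $t\in[0,1]$,
$$S_n(t):=\sum_{i=1}^n q_{i-1}\bigl(1-\max(\rho_i,t)\bigr)\ \ge\ (1-\epsilon)\,\alpha\,(1-t^n).$$
(If $F$ has unbounded support and $E[\max_i X_i]=\infty$, first truncate the $X_i$ and let the cutoff tend to infinity.)

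Next I would record the shape of $S_n$. By Lemma~\ref{lemma:decThresholds} the $\rho_i$ are decreasing, so for a fixed $t$ there is an index $k=k(t)$ with $\rho_k\ge t>\rho_{k+1}$ (setting $\rho_0:=1$, $\rho_{n+1}:=0$), and splitting the sum at $k$ and telescoping the block on which $\max(\rho_i,t)=\rho_i$ gives
$$S_n(t)=\bigl(1-h(k/n)\bigr)+(1-t)\sum_{j=k}^{n-1}h(j/n).$$
Fix a small parameter $\epsilon'$ (I will want $\epsilon'\le\epsilon/3$ and $\epsilon'<\epsilon\alpha$), apply the second property of a threshold function with parameter $\epsilon'$ to get the corresponding $\delta_0$, and take $n_\epsilon$ larger than $1/\delta_0$ and a couple of other explicit constants. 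Then three regimes of $k$ cover $\{0,\dots,n\}$. In the first, $k=n$ (that is, $t\le\rho_n$): here $S_n(t)=1-h(1)\ge\alpha$ by the first property of $\alpha$-strongness, while $1-t^n\le1$. In the second, $k/n\le\epsilon\alpha$ (which includes $k=0$): drop the nonnegative term $1-h(k/n)$, use the left-Riemann-sum estimate $\sum_{j=k}^{n-1}h(j/n)\ge n\int_{k/n}^1 h(r)\,dr\ge n(\alpha-k/n)\ge n(1-\epsilon)\alpha$ — where $\int_0^1 h\ge\alpha$ is the second property of $\alpha$-strongness — together with $1-t^n\le n(1-t)$, to get $S_n(t)\ge(1-\epsilon)\alpha(1-t^n)$.

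The remaining bulk regime, $\epsilon'+\tfrac1n\le k/n\le\tfrac{n-1}{n}$, is where the work concentrates; put $s=k/n$. From $t\le\rho_k$ and the concavity of $h$, $1-t\ge1-\rho_k\ge\tfrac1n\bigl(-h'(s-\tfrac1n)/h(s-\tfrac1n)\bigr)$, and the second property of a threshold function upgrades the argument $s-\tfrac1n$ to $s$ at the cost of a factor $1-\epsilon'$; combining with $\sum_{j=k}^{n-1}h(j/n)\ge n\int_s^1 h(r)\,dr$ this gives $S_n(t)\ge(1-\epsilon')\bigl[1-h(s)-\tfrac{h'(s)}{h(s)}\int_s^1 h(r)\,dr\bigr]$. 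Symmetrically, $t>\rho_{k+1}=h(s+\tfrac1n)/h(s)$ and the mean value theorem applied to $(\log h)'=h'/h$ yield $t^n>\exp(h'(\zeta)/h(\zeta))$ for some $\zeta\in(s,s+\tfrac1n)$; the same threshold-function property controls $h'(\zeta)/h(\zeta)$ by $h'(s)/h(s)$, and the elementary inequality $1-e^{x/(1-\epsilon')}\le\tfrac1{1-\epsilon'}(1-e^x)$ for $x\le0$ then gives $1-t^n\le\tfrac1{1-\epsilon'}\bigl(1-\exp(h'(s)/h(s))\bigr)$. Now the third property of $\alpha$-strongness, $1-h(s)-\tfrac{h'(s)}{h(s)}\int_s^1 h(r)\,dr\ge\alpha\bigl(1-\exp(h'(s)/h(s))\bigr)$, combines these into $S_n(t)\ge(1-\epsilon')^2\alpha(1-t^n)\ge(1-\epsilon)\alpha(1-t^n)$. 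Since $\epsilon'+\tfrac1n<\epsilon\alpha$ for large $n$, the second and third regimes overlap and all values of $k$ are covered, completing the reduction.

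The genuinely delicate point — and the reason the definition of a threshold function includes its second property — is uniformity near $s=0$. For the intended function $h(s)=\cos(as)$ one has $h'(0)=0$, so both sides of the third $\alpha$-strong inequality vanish as $s\to0$, and any additive estimate of the form $S_n(t)=\bigl[1-h(s)-\tfrac{h'(s)}{h(s)}\int_s^1 h\bigr]+o(1)$ would be worthless there; converting the $O(1/n)$ discretization errors into \emph{multiplicative} $(1\pm O(\epsilon'))$ factors — once to bound $S_n(t)$ from below and once to bound $1-t^n$ from above — is exactly what that property provides, while the second $\alpha$-strong property lets the small-$k$ regime go through with no loss beyond $(1-\epsilon)\alpha$. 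Everything else — the Riemann-sum and mean value estimates, the monotonicity of $h'/h$ (immediate from $h$ being strictly concave and strictly decreasing), the two scalar inequalities used above, and fixing $\epsilon'$ and $n_\epsilon$ consistently — is routine.
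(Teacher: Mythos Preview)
Your argument is correct and structurally matches the paper's proof: both reduce to a pointwise tail comparison and split into the same three regimes, using property (i) when $t\le\rho_n$ (equivalently $x<\theta_n$), property (ii) when $s=k/n$ is small, and property (iii) together with threshold-function property (ii) in the bulk; the Riemann-sum bound, the concavity estimate for $1-\rho_k$, and the scalar inequality $1-e^{zt}\le t(1-e^{z})$ all appear in the paper's Lemma~\ref{lemma:algLowerbound} and its auxiliary Lemma~\ref{lemma:auxiliary}.

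The one genuine simplification in your version is the change of variable to quantile space. The paper works in value space and therefore needs an extra step (its Lemma~\ref{lemma:OPTupperbound}) to replace $Pr[\max_i X_i\ge x]=1-(1-G(x))^n$ by $1-\exp(-nG(x))$ before it can invoke property (iii). By writing everything in terms of $t=F(x)$ you compare $S_n(t)$ directly to the exact quantity $1-t^n$, and your mean-value step $t^n>\exp\!\bigl(h'(\zeta)/h(\zeta)\bigr)$ plays the role of that lemma while costing nothing additional. This makes the argument a bit shorter and removes the dependence on $F$ earlier, but the underlying mechanism is the same.
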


\begin{proof}
	Let $OPT$ be a random variable that denotes the optimum solution and $ALG$ be a random variable that denotes the value picked by the algorithm. We can write the expectation of $OPT$ as
	\begin{align}
		\E[OPT]=\int_{0}^{\infty}Pr[\max X_i\geq x]dx \enspace . \label{ineq:integralOpt}
	\end{align}
	Similarly the expectation of $ALG$ is
	\begin{align}
		\E[ALG]=\int_{0}^{\infty}Pr[X_\tau\geq x]dx \enspace .\label{ineq:integralAlg}
	\end{align}
	The main idea behind the proof is to show for $\alpha$-strong functions that the integrand in \eqref{ineq:integralAlg} is an approximation of the integrand in \eqref{ineq:integralOpt} for every non-negative value of $x$. In particular, for every $\epsilon$ there exists some $n_\epsilon$ such that for every $n\geq n_\epsilon$ the second integrand is at least $(1-\epsilon)\alpha$ times the first integrand and this proves the theorem.
	
	Let us begin with finding an upper bound for the integrand in \eqref{ineq:integralOpt}. Let $G(x)=1-F(x)$ for every $x\in \mathcal{R}^{\geq 0}$. The following lemma gives an upper bound for $Pr[\max X_i\geq x]$ based on $G(x)$ and $n$.
	
	\begin{lemma} \label{lemma:OPTupperbound}
		For every $\epsilon>0$ there exists an $n_\epsilon$ such that for every $n\geq n_\epsilon$ the following inequality holds :$$Pr[\max X_i\geq x] \leq \frac{1-\exp(-nG(x))}{1-\epsilon}\enspace .$$
	\end{lemma}
	
	Lemma \ref{lemma:OPTupperbound} gives us an upper bound on $Pr[\max X_i\geq x]$. Now we aim to find a lower bound for $Pr[X_\tau\geq x]$. Through these two bounds we are able to find a lower bound on the approximation factor of the algorithm.
	
	In Lemma \ref{lemma:decThresholds} we showed that the thresholds are decreasing. Hence for an $x\in \mathcal{R}^{\geq 0}$, if $x<\theta_n$ then $Pr[X_\tau\geq x]$ is equal to $Pr[X_\tau\geq \theta_n]$ because the algorithm never selects an item below that value. Moreover, $Pr[X_\tau\geq \theta_n]$ is equal to $Pr[\tau\leq n]$ which is equal to $1-Pr[\tau>n]=1-q_n=1-h(1)$. The first property of $\alpha$-strong functions ensures that this number is at least $\alpha$. Since $Pr[\max X_i\geq x]$ is no more than 1, therefore, for every $x<\theta_n$ we have $Pr[X_\tau\geq x]\geq\alpha Pr[\max X_i\geq x]$.
	
	Now suppose $x\in \mathcal{R}^{\geq 0}$ and $x\geq \theta_n$. For $Pr[X_\tau\geq x]$ we have,
	\begin{align}
		Pr[X_\tau\geq x] &= \sum_{i=1}^{n}Pr[X_\tau\geq x|\tau=i]Pr[\tau=i] \nonumber \\
		&=\sum_{i=1}^{n}q_{i-1}(1-F(\max\{\theta_i,x\})) \enspace .\label{ineq:simpleProb} 
	\end{align}

	Since the thresholds are decreasing, there exists a unique index $j(x)$ for which $\theta_{j(x)}>x\geq\theta_{j(x)+1}$. For the sake of simplicity we assume there is an imaginary item $X_0$ for which $\theta_0=\infty$. In this way $j(x)$ is an integer number from 0 to $n-1$. By expanding \eqref{ineq:simpleProb} we have:
	\begin{align}
		Pr[X_\tau\geq x] &= \sum_{i=1}^{n}q_{i-1}(1-F(\max\{\theta_i,x\})) \nonumber \\
		&= \sum_{i=1}^{n}q_{i-1}G(\max\{\theta_i,x\}) \nonumber \\
		&=\sum_{i=1}^{j(x)}q_{i-1}G(\theta_i) + \sum_{i=j(x)+1}^{n}q_{i-1}G(x) \enspace . \label{ineq:expanded}
	\end{align}	
	
	The first sum in \eqref{ineq:expanded} is indeed the probability of selecting one of the first $j(x)$ items, therefore we can rewrite it as $1-q_{j(x)}$. Hence,
	\begin{align}
		\Pr[X_\tau\geq x] &= 1-q_{j(x)}+\sum_{i=j(x)+1}^{n}q_{i-1}G(x) \nonumber \\
		&= 1-q_{j(x)} + n G(x)\sum_{i=j(x)+1}^{n}q_{i-1}\frac{1}{n} \nonumber \\
		&= 1-q_{j(x)} + n G(x)\sum_{i=j(x)+1}^{n}h((i-1)/n)\frac{1}{n} \nonumber \\
		&\geq 1-q_{j(x)} + n G(x)\int_{j(x)/n}^{1}h(r)dr \enspace . \label{ineq:riemann}
	\end{align}
	The integral in \eqref{ineq:riemann} comes from the fact that $h$ is a decreasing function and for such functions the Riemann sum of an interval is an upper bound of the integral of the function in that interval. For simplicity let $s(x)=j(x)/n$. Inequality \eqref{ineq:riemann} can be written as follows:
	\begin{align}
		Pr[X_\tau\geq x]\geq 1-h(s(x))+nG(x)\int_{s(x)}^{1}h(r)dr \enspace . \label{ineq:integralForm}
	\end{align}	
	
	In order to complete the proof of the theorem, we need to show that the right hand side of Inequality \eqref{ineq:integralForm} is an approximation of $Pr[\max X_i\geq x]$. To this end, we use the following lemma.
	
	\begin{lemma} \label{lemma:algLowerbound}
		For every $\epsilon>0$ there exists an $n_\epsilon$ such that for every integer $n\geq n_\epsilon$ the following inequality holds for every $x\geq \theta_n$:
		$$1-h(s(x))+nG(x)\int_{s(x)}^{1}h(r)dr\geq (1-\epsilon)\alpha(1-\exp(-nG(x)))\enspace .$$
	\end{lemma}
	To wrap up the proof of the theorem we combine the results of the previous lemmas. Suppose $n_1$ and $n_2$ are the lower bounds of Lemma \ref{lemma:OPTupperbound} and Lemma \ref{lemma:algLowerbound} for $n$, respectively, such that their inequalities hold for $\epsilon/2$. For every $n\geq n_\epsilon = \max\{n_1,n_2\}$ we have:
	\begin{align}
	Pr[X_\tau\geq x]
	&\geq 1-h(s(x))+nG(x)\int_{s(x)}^{1}h(r)dr & \text{Inequality}\enspace \eqref{ineq:integralForm} \\
	&\geq (1-\frac{\epsilon}{2})\alpha(1-\exp(-nG(x))) & \text{Lemma}\enspace  \ref{lemma:algLowerbound} \\
	&\geq (1-\frac{\epsilon}{2})^2\alpha \enspace Pr[\max X_i\geq x] & \text{Lemma} \enspace\ref{lemma:OPTupperbound}\\
	&\geq (1-\epsilon)\alpha \enspace Pr[\max X_i\geq x] \enspace.\nonumber
	\end{align}
	This shows that for every non-negative value of $x$ the chance of the algorithm in selecting an item with value at least $x$ is an approximation of the corresponding probability for the optimum solution. More specifically, we showed that for every $n\geq n_\epsilon$ and for every $x\geq 0$ the integrand of \eqref{ineq:integralAlg} is a $(1-\epsilon)\alpha$-approximation of the integrand of \eqref{ineq:integralOpt}, hence the theorem is proved.
\end{proof}

Now we have all the materials needed to prove Theorem \ref{theorem:iidResult}. In order to prove the theorem, we show that the function $h(s)=\cos(as)$ is an $\alpha$-strong function, where $a\approx 1.306$ is a root of $\cos(a)+\sin(a)/a-1$ and $\alpha=1-\cos(a)\approx 0.7388$. To this end, we first need to show that this function is a threshold function:
\begin{enumerate}[i.]
	\item To show the concavity of $h$ it suffices to show that its second derivative is negative for every $0<s\leq 1$. Note that $h'(s)=-a \sin(a s)$ and $h"(s)=-a^2\cos(a s)$.
	\item The ratio of $h'(s)/h(s)$ for every $s$ is equal to $-a\tan(a s)$. For every $\epsilon$ we need to show that there exists some $\delta_0\leq \epsilon$ such that for every $\delta\leq \delta_0$ and $\epsilon+\delta\leq s\leq 1$ the following holds:
	$$-a\tan(a (s-\delta)) \leq -(1-\epsilon)a\tan(a s)$$
	or equivalently, by dividing both sides to $-a$ and changing the direction of the inequality we want to have:
	$$\tan(a s-a \delta)) \geq (1-\epsilon)\tan(a s)\enspace .$$
	Note that $\tan(a s)$ is a convex function because $\tan"(a s)=2\tan(a s)\sec^2(a s)\geq 0$ for $0\leq s\leq 1$. For every $0\leq \delta\leq s$ in such functions we have:
	$$\frac{\tan(a s)-\tan(as-a\delta)}{a\delta}\leq \tan'(as)=\sec^2(as)\leq sec^2(a)\enspace .$$
	Therefore,
	$$\tan(a s)\leq \tan(as-a\delta)+a\delta\sec^2(a)\enspace .$$
	By multiplying both sides by $(1-\epsilon)$ and assuming that $\delta\leq\delta_0=\frac{\epsilon\tan(a\epsilon)}{a(1-\epsilon)\sec^2(a)}$ we have:
	\begin{align}
		(1-\epsilon)\tan(as) &\leq (1-\epsilon)(\tan(as-a\delta)+a\delta\sec^2(a)) \nonumber \\
		&\leq \tan(as-a\delta)-\epsilon\tan(as-a\delta)+(1-\epsilon)a\delta\sec^2(a) \nonumber \\
		&\leq \tan(as-a\delta)-\epsilon\tan(as-a\delta)+\epsilon\tan(a\epsilon) \nonumber \\
		&=\tan(as-a\delta)-\epsilon(\tan(a(s-\delta))-\tan(a\epsilon)) \label{ineq:2ndProperty}
	\end{align}
	Note that $\tan(x)$ is an increasing function, therefore for every $s\geq \epsilon+\delta$ Inequality \eqref{ineq:2ndProperty} is less than or equal to $\tan(as-a\delta)$, thus the second property holds as well.
\end{enumerate}
We showed that $h(s)=\cos(a s)$ is a threshold function. Now we prove that this threshold function is also an $\alpha$-strong function. Due to definition $\alpha=1-cos(a)=1-h(1)$, thus the first property holds. Moreover, $\int_{0}^{1}h(r)dr=sin(a)/a$. Again, due to definition $a$ is a root of $cos(a)+sin(a)/a-1$, and thus $sin(a)/a=1-cos(a)=\alpha$. Now we only need to show that the third property of $\alpha$-strong functions holds. To do so, we need to show that:
\begin{align}
	1-\cos(as)+\tan(as)(\sin(a)-\sin(as))\geq\alpha(1-\exp(-a\tan(as)))\enspace . \label{ineq:finalIneq}
\end{align} 
By subtracting $\alpha(1-\exp(-a\tan(as)))$ from both sides and multiplying them by $\cos(a s)$ we have:

\begin{align*}
\cos(a s) &-\cos(as)^2+\sin(a)sin(as)-\sin(as)^2\\&-\alpha\cos(as)+\alpha\cos(as)\exp(-a\tan(as))\geq 0\enspace .
\end{align*}

Note that $\cos^2(as)+\sin^2(as)=1$, therefore the above inequality is equivalent to:
$$(1-\alpha)\cos(as)+\sin(a)\sin(as)+\alpha\cos(as)\exp(-a\tan(as))\geq 1\enspace .$$
Since $\sin(a)/a=1-\cos(a)=\alpha$ we can replace $\sin(a)$ with $\alpha a$. Also, from the relation between trigonometric functions we have $\cos(x)=1/\sqrt{1+\tan^2(x)}$ and $\sin(x)=\tan(x)/\sqrt{1+\tan^2(x)}$. By considering these equalities and assuming that $w=\tan(as)$ the above inequality becomes simplified as follows:
$$\frac{1-\alpha}{\sqrt{1+w^2}}+\frac{\alpha a w}{\sqrt{1+w^2}}+\frac{\alpha \exp(-a w)}{\sqrt{1+w^2}}\geq 1\enspace .$$
By multiplying both sides by $\sqrt{1+w^2}$ and raising them to the power of two, and subtracting $1+w^2$ from both sides we have:
\begin{align*}
	(1-\alpha+\alpha a w + \alpha \exp(-a w))^2 - 1 - w^2 \geq 0
	\enspace .
\end{align*}

Now we use the following lemma to finish the proof.

\begin{lemma} \label{lemma:aw}
	Suppose $A(w)=(1-\alpha+\alpha a w + \alpha \exp(-a w))^2 - 1 - w^2$ where $a\approx 1.306$ is a root of $\cos(a)+\sin(a)/a-1$ and $\alpha=1-\cos(a)\approx 0.7388$. Then for every $0\leq w\leq \tan(a)$ we have $A(w)\geq 0$.
\end{lemma}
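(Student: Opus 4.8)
The plan is to reduce the claimed inequality $A(w)\ge 0$ to the monotonicity of a single auxiliary function, after a trigonometric change of variables that mirrors the substitution $w=\tan(as)$ used to obtain \eqref{ineq:finalIneq} in the first place. Write $B(w)=1-\alpha+\alpha a w+\alpha e^{-aw}$, so that $A(w)=B(w)^2-(1+w^2)$. Using the two ``free'' identities coming from the definition of $a$ --- namely $1-\alpha=\cos a$ and $\sin a=\alpha a$ (the latter a rearrangement of $\cos a+\sin a/a=1$) --- we have $B(w)=\cos a+w\sin a+\alpha e^{-aw}$, hence $B'(w)=\sin a\,(1-e^{-aw})\ge 0$ on $[0,\infty)$ and $B(w)\ge B(0)=1>0$. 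Consequently, on $[0,\tan a]$ the inequality $A(w)\ge 0$ is equivalent to $B(w)\ge\sqrt{1+w^2}$.

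Next I would substitute $w=\tan\phi$ with $\phi=\arctan w\in[0,a]$ (legitimate since $0\le a<\pi/2$). Multiplying $B(w)\ge\sqrt{1+w^2}$ by $\cos\phi>0$ and using $\cos\phi=1/\sqrt{1+w^2}$, $\sin\phi=w/\sqrt{1+w^2}$ and $\cos a\cos\phi+\sin a\sin\phi=\cos(a-\phi)$, the claim becomes $\cos(a-\phi)+\alpha\cos\phi\,e^{-a\tan\phi}\ge 1$. Setting $t=a-\phi\in[0,a]$, this reads $\alpha\cos(a-t)\,e^{-a\tan(a-t)}\ge 1-\cos t$ for $t\in[0,a]$. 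At $t=0$ the right side is $0$ while the left side is positive, so it holds trivially; for $t\in(0,a]$ both sides are positive, so after taking logarithms it is equivalent to $L(t)\ge 0$ on $(0,a]$, where
\[
L(t)\;=\;\ln\alpha+\ln\cos(a-t)-a\tan(a-t)-\ln\!\big(1-\cos t\big).
\]

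Now I would analyze $L$ on $(0,a]$. The definition of $a$ gives two further identities: $\alpha=1-\cos a$, so $L(a)=0$; and $\sin a=a(1-\cos a)$, which via the half-angle formula becomes $\cos(a/2)=a\sin(a/2)$, i.e.\ $\cot(a/2)=a$. Since $L'(t)=\tan(a-t)+a\sec^2(a-t)-\cot(t/2)$, the second identity gives $L'(a)=a-\cot(a/2)=0$. The key claim is that $L$ is convex on $(0,a)$, that is,
\[
L''(t)\;=\;\tfrac12\csc^2(t/2)-\sec^2(a-t)\,\big(1+2a\tan(a-t)\big)\;\ge\;0\qquad(0<t<a).
\]
Granting this, $L'$ is nondecreasing on $(0,a)$ and continuous up to $t=a$ with $L'(a)=0$, hence $L'\le 0$ on $(0,a]$; therefore $L$ is nonincreasing on $(0,a]$ and $L(t)\ge L(a)=0$ there. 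Exponentiating and reversing the two substitutions yields $A(w)\ge 0$ for all $w\in[0,\tan a]$, with equality only at $w=0$.

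The main obstacle is verifying the convexity claim $L''\ge 0$ on $(0,a)$. Both $\csc^2(t/2)$ and $\sec^2(a-t)$ blow up at the endpoints $t=0$ and $t=a$ respectively, so $L''$ is large and positive near each end, but its minimum over the interior is small (it occurs near $t\approx 0.7$, where $L''\approx 0.08$), so no crude one-line bound will do. I would clear denominators --- equivalently, prove the elementary inequality $\cos^2(a-t)\ge(1-\cos t)\big(1+2a\tan(a-t)\big)$ on $(0,a)$ --- and then either split $(0,a)$ into two or three subintervals on each of which one side is controlled by a simple monotone estimate of the other, or substitute $u=a-t$ and bound the resulting two-term expression using elementary estimates for $\cos u$, $\tan u$ and $\cos t$ on $[0,a]\subset[0,\pi/2)$ together with the numerical value $a\approx 1.3065$. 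This step is routine but delicate; everything else in the argument is soft.
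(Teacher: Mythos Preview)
Your route is genuinely different from the paper's. The paper attacks $A$ head-on by repeated differentiation and a Rolle-type root count: it shows $A'''$ has at most two zeros, hence $A''$ has at most three (located numerically near $w_1\approx0.282$, $w_2\approx1.243$, $w_3\approx2.271$), checks that $A'$ is positive at each of these, concludes $A'$ has at most one positive zero (since $A'(0)=0$), and finally uses $A(0)=0$, $A''(0)>0$, $A(\tan a)>0$ to rule out any dip below zero. You instead undo the substitution $w=\tan\phi$ and reduce to nonnegativity of
\[
L(t)=\ln\alpha+\ln\cos(a-t)-a\tan(a-t)-\ln(1-\cos t)
\]
on $(0,a]$, exploiting the pleasant boundary conditions $L(a)=L'(a)=0$ (both consequences of the defining identity $\cot(a/2)=a$) and reducing everything to convexity of $L$. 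What you gain is structural clarity: the double zero of $A$ at $w=0$ becomes the double zero of $L$ at $t=a$, and ``convex with a double root at the right endpoint'' is a clean sufficient condition. What the paper gains is directness: no change of variable, just derivatives of $A$ and a handful of numerical evaluations. Both arguments ultimately rest on a numerical verification of comparable difficulty --- theirs on the sign of $A'$ at three specific points, yours on the sign of $L''$ over an interval.

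Two corrections. First, $\sec^2(a-t)$ does \emph{not} blow up at $t=a$; it equals $1$ there, and in fact $L''(a)=\tfrac12\csc^2(a/2)-1=1/\alpha-1\approx0.35$, positive but not large. This is only a slip in your informal description and does not affect the convexity claim. Second --- and this is the real gap --- you have not actually established $L''\ge0$; you have asserted it and sketched a plan. Your reductions up to that point are correct, and your numerical estimate (minimum of $L''$ around $0.08$ near $t\approx0.7$) checks out, so the margin is genuinely thin and the interval-splitting you propose will indeed be delicate. Until that step is written out, the proof is incomplete in exactly the same sense as the paper's own numerics-backed argument.
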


Lemma \ref{lemma:aw} shows that this inequality holds for every $0\leq w\leq \tan(a)$. Consequently, Inequality \eqref{ineq:finalIneq} holds for every $0\leq s\leq 1$. This completes the proof that $h(s)=\cos(as)$ is an $\alpha$-strong function for $\alpha\approx0.7388$, since it has all the three properties.
	

\section{Non IID  Distributions} \label{section:largeMarket}
In this section we study more generalized cases of the prophet inequalities problem. Suppose $X_1,\ldots,X_n$ are random variables from distribution functions $F_1,\ldots,F_n$. Similar to Section \ref{section:iid} we assume, for the sake of simplicity, that all distribution functions are continuous and strictly increasing on a subinterval of $\mathcal{R}^+$. The goal of this section is to show improving results for the best order and a random order of \textit{large market} instances. We use the term large market as a general term to refer to instances with repeated distributions. The following definition formally captures this concept.

\begin{definition}
	A set of $n$ items is called $m$-frequent if for every item $i$ with distribution function $F_i$ there are at least $m-1$ other items in the set with the same distribution function as $F_i$.
\end{definition}

In the remainder of this section we show for the best order and a random order of a large market instance that one can find a sequence of thresholds which in expectation performs as good as our algorithm for iid items. Roughly speaking, we design algorithms that are $\alpha$-approximation for large enough $m$-frequent instances, where $\alpha\approx 0.7388$. The following two theorems formally state our results for the best order and a random order, respectively.
 
\begin{theorem}\label{theorem:bestOrder}
	For every $\epsilon>0$ and set $\mathbb{X}$ of $n$ items, there exists a number $m_\epsilon$ (a function of $\epsilon$ and independent of $n$) such that if $\mathbb{X}$ is $m$-frequent for $m\geq m_\epsilon$ then there exits an algorithm which is $(1-\epsilon)\alpha$-approximation on a permutation of $\mathbb{X}$.
\end{theorem}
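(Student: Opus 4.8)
The plan is to mirror the proof of Theorem~\ref{Theorem:alphaStrong}, after using the freedom to permute $\mathbb X$ to bring the instance into an ``iid-like'' shape. Write $\mathbb X$ as a multiset containing, for each distinct type $j$, at least $m$ copies of a distribution $F^{(j)}$, and set $\Gamma(x):=\sum_i\Pr[X_i\ge x]$, the expected number of realized values that are at least $x$. As in Section~\ref{section:iid} I would fix $\epsilon$, reduce to the pointwise-in-$x$ claim that some algorithm achieves $\Pr[X_\tau\ge x]\ge(1-\epsilon)\alpha\,\Pr[\max_i X_i\ge x]$ for every $x\ge 0$ (integrating over $x$ gives the theorem), and dispose of the prophet side first. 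The analogue of Lemma~\ref{lemma:OPTupperbound} I want is $\Pr[\max_i X_i\ge x]\le\frac{1-\exp(-\Gamma(x))}{1-\epsilon}$. I would prove it by splitting on whether some type $j$ has $G^{(j)}(x):=1-F^{(j)}(x)$ exceeding a small constant $\gamma=\gamma(\epsilon)$: if so, $m$-frequency forces $\Gamma(x)\ge m\gamma$, so once $m\ge\Omega(\gamma^{-1}\log(1/\epsilon))$ the right-hand side is at least $1$ and the bound is vacuous; otherwise every type is ``rare'' at level $x$, hence $\prod_i F_i(x)=\prod_i(1-G_i(x))\ge\exp(-(1+2\gamma)\Gamma(x))$, and the elementary inequality $1-e^{-(1+2\gamma)g}\le(1+2\gamma)(1-e^{-g})\le\tfrac{1-e^{-g}}{1-\epsilon}$ (valid for $\gamma$ small enough in terms of $\epsilon$) closes the case. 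This is one of the two places the large-market hypothesis is used, and it already fixes part of $m_\epsilon$.

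For the algorithm side the target is a permutation together with a threshold schedule for which $\Pr[X_\tau\ge x]$ obeys the same lower bound as in the iid analysis with $nG(x)$ replaced by $\Gamma(x)$, i.e.\ $\Pr[X_\tau\ge x]\ge 1-h(s(x))+\Gamma(x)\int_{s(x)}^{1}h(r)\,dr$ for the appropriate threshold-crossing point $s(x)$, where $h(s)=\cos(as)$. Once this holds, Lemma~\ref{lemma:algLowerbound} — whose statement and proof only involve $h$ together with a nonnegative scalar playing the role of $nG(x)$ — upgrades it to $\Pr[X_\tau\ge x]\ge(1-\epsilon)\alpha(1-\exp(-\Gamma(x)))$, and combining this with the prophet bound above gives the pointwise estimate (for $m\ge m_\epsilon$), hence the theorem. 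The mechanism producing this bound is exactly the chain \eqref{ineq:expanded}--\eqref{ineq:integralForm}: one wants the thresholds, read along the permutation, to be decreasing, so the positions whose threshold exceeds $x$ form a prefix and their contribution telescopes to $1-h(s(x))$, while the positions with threshold at most $x$ form a suffix over which the masses $G_{(i)}(x)$ (with $G_{(i)}:=1-F$ for the distribution placed at position $i$) sum to the relevant fraction of $\Gamma(x)$ and contribute, by the Riemann-sum step as in \eqref{ineq:riemann}, the term that replaces $nG(x)\int_{s(x)}^{1}h$.

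The \emph{main obstacle} is constructing such a permutation and schedule when the distributions are not stochastically ordered. The naive idea — interleave the $\ge m$ copies of each type uniformly across the $n$ positions and keep $q_i=h(i/n)$ — fails badly: it forces the algorithm to spend stopping probability on low-value realizations (consider a type concentrated near $0$), and a Jensen-type gap between $\sum_j p_j(1-e^{-y_j})$ and $1-e^{-\sum_j p_j y_j}$ (with $p_j$ the fraction of positions of type $j$ and $y_j=nG^{(j)}(x)$) degrades the guarantee to roughly $1/k$ for $k$ types. The correct construction instead places on each type, at levels deep in its tail, effectively-skipping thresholds, so that only the ``upper-envelope'' types are active at a given level and near level $x$ the process behaves like the iid process with effective count $\Gamma(x)$; the delicate point is that a single permutation — together with a single survival curve obtained from $h$ by a time-change that is flat on the skipped steps — must accomplish this simultaneously for every $x$ and must make the induced thresholds (level-wise) decreasing.

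This is where the large-market assumption enters a second time: having at least $m\ge m_\epsilon$ copies of each type gives enough room both to position each type finely in the correct sub-range of $[0,1]$ and to absorb into $\epsilon$ the $\Theta(1/n)$ discretization errors incurred in the Riemann sums and in the crossing indices $j(x)$. Verifying that this construction delivers the displayed lower bound on $\Pr[X_\tau\ge x]$ — and that the thresholds it induces are decreasing — is the technical heart of the argument; once it is established, the remainder follows Section~\ref{section:iid} essentially verbatim, with $nG(x)$ uniformly replaced by $\Gamma(x)=\sum_i\Pr[X_i\ge x]$.
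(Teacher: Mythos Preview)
Your proposal takes a genuinely different route from the paper, and the route you take has a real gap while the paper's is much simpler.

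The paper does \emph{not} try to replay the iid analysis pointwise with $\Gamma(x)=\sum_i\Pr[X_i\ge x]$. Instead it reduces to the iid case directly: arrange the items into $s$ consecutive blocks, each block containing exactly one copy of every distinct type (discarding at most $s-1$ copies per type; Lemma~\ref{lemma:discard} bounds the loss), and use the \emph{same} threshold $\theta_\ell$ for all items in block~$\ell$, where $\theta_1,\dots,\theta_s$ are the iid thresholds for $s$ items with distribution $F(x)=\prod_j F^{(j)}(x)$. The point is that the block maxima are iid with distribution $F$, and $\max_i X_i$ equals the max of the block maxima, so Theorem~\ref{theorem:iidResult} applies essentially verbatim (Lemma~\ref{lemma:partition}); the only extra work is a $(1-\epsilon/2)$ slack coming from the within-block ordering, which is handled by a telescoping argument in the proof of Lemma~\ref{lemma:partition}.

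You actually consider and reject this construction (``interleave the copies uniformly''), but your rejection is based on the wrong threshold schedule: you keep $q_i=h(i/n)$, i.e.\ a distinct threshold at every position, and then correctly observe a Jensen-type loss. The paper's fix is precisely to use per-\emph{block} thresholds, $\theta_{\lceil i/k\rceil}$ rather than $\theta_i$. With a constant threshold across a block, the probability of stopping in block~$\ell$ with value at least $x$ collapses to $q_{\ell-1}\bigl(1-F(\max\{x,\theta_\ell\})\bigr)$ up to a $(1-\epsilon)$ factor, and there is no Jensen gap at all --- the analysis becomes literally the iid analysis on $s$ items.

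Your alternative --- type-dependent ``skipping'' thresholds plus a time-changed survival curve so that only upper-envelope types are active at each level --- is left as an unverified ``technical heart''. It may be salvageable, but as written it is not a proof: you have not exhibited the permutation or the schedule, nor shown the induced thresholds are decreasing, nor established the claimed lower bound on $\Pr[X_\tau\ge x]$. Given that the block reduction above closes the argument in a page, this machinery is also unnecessary.
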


\begin{theorem}\label{theorem:randOrder}
	For every $\epsilon>0$ and set $\mathbb{X}$ of $n$ items there exists a number $c_\epsilon$ (a function of $\epsilon$ and independent of $n$) such that if $\mathbb{X}$ is $m$-frequent for $m\geq c_\epsilon \log(n)$ then there exists an algorithm which in expectation is $(1-\epsilon)\alpha$-approximation on a random permutation of $\mathbb{X}$.
\end{theorem}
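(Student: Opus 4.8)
\medskip\noindent\textbf{Proof plan for Theorem~\ref{theorem:randOrder}.}
My plan is to re-use the iid machinery of Section~\ref{section:iid} one distribution at a time, and to argue that a uniformly random permutation is, with high probability, ``balanced'' enough that this analysis goes through with only a $(1{\pm}\epsilon)$ loss. Group the items into types $1,\dots,T$, where type $t$ has $k_t\ge m$ copies (so $T\le n/m$ and $\sum_t k_t=n$), with distribution $F_t$ and tail $G_t=1-F_t$; split $[0,1]$ into consecutive windows $I_t=(\ell_t,r_t]$ with $|I_t|=k_t/n$. The algorithm keeps a per-type counter and, when the $j$-th copy of type $t$ arrives, applies the threshold $F_t^{-1}\!\bigl(h(\ell_t+j/n)/h(\ell_t+(j-1)/n)\bigr)$ with $h(s)=\cos(as)$ --- i.e.\ the copies of type $t$ collectively run Algorithm~\ref{alg:iid} for $k_t$ items, rescaled into the window $I_t$. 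This is essentially the construction behind Theorem~\ref{theorem:bestOrder}; the difference is that now we do not get to choose where the copies of each type land.

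First I would record a non-iid analogue of Lemma~\ref{lemma:OPTupperbound}. Since $\mathrm{OPT}=\max_iX_i$ is permutation-independent, $\E[\mathrm{OPT}]=\int_0^\infty(1-\prod_tF_t(x)^{k_t})\,dx$, and $m$-frequency gives $m\,G_i(x)\le g(x):=\sum_{i'}G_{i'}(x)$ for every $i$; hence on the range where $g(x)$ is bounded by a constant every $G_i(x)$ is $O(1/m)$ and $\prod_i(1-G_i(x))\ge e^{-(1+O(1/m))g(x)}$, while on the complementary range both $\Pr[\max_iX_i\ge x]$ and $1-e^{-g(x)}$ are close to $1$. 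So for $m$ large enough in terms of $\epsilon$, $\Pr[\max_iX_i\ge x]\le(1-\epsilon)^{-1}(1-e^{-g(x)})$ for all $x$. Second I would call a permutation $\sigma$ \emph{good} if for every type $t$ and every prefix length $i$ the count $N_t(i)$ of type-$t$ items among the first $i$ positions satisfies $|N_t(i)-k_t\,i/n|\le\epsilon' k_t$. As $N_t(i)$ is hypergeometric with variance at most $k_t/4$, a Bernstein bound and a union bound over the at most $n^2$ pairs $(t,i)$ give $\Pr[\sigma\text{ not good}]\le n^{O(1)}e^{-\Omega(\epsilon'^2m)}$, which is below $\epsilon$ once $m\ge c_\epsilon\log n$ for a suitably large $c_\epsilon=c_\epsilon(\epsilon)$. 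This is the only place $\log n$ enters.

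Conditioned on any $\sigma$, the rejection probability at a position carrying the $j$-th type-$t$ copy is exactly $h(\ell_t+j/n)/h(\ell_t+(j-1)/n)$, so the survival probability telescopes to $q_i^\sigma=\prod_t h(\ell_t+N_t(i)/n)/h(\ell_t)$; in particular $q_n^\sigma=\prod_t h(r_t)/h(\ell_t)=h(1)=1-\alpha$ for every $\sigma$. On a good $\sigma$ each factor's log differs from its idealised value $\ln\bigl(h(\ell_t+(i/n)|I_t|)/h(\ell_t)\bigr)$ by at most $O(\epsilon'|I_t|)$ (using $|h'|\le a$ and $h\ge\cos a>0$), and these errors sum to $O(\epsilon')$ because $\sum_t|I_t|=1$; likewise the $j$-th type-$t$ copy sits at position $\approx (j/k_t)n$. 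Feeding these into $\Pr[X_\tau\ge x\mid\sigma]=\sum_i q_{i-1}^\sigma\,G_{t_{\sigma(i)}}(\max\{\theta_i^\sigma,x\})$ and grouping the terms by type, the contribution of type $t$ becomes --- up to $(1{\pm}\epsilon')$ factors --- the Riemann sum of~\eqref{ineq:riemann}--\eqref{ineq:integralForm} for a type-$t$ iid instance of $k_t$ items rescaled into $I_t$; summing over $t$ and invoking property~(iii) in the definition of $\alpha$-strong functions for $h=\cos(a\cdot)$ (equivalently Lemma~\ref{lemma:aw}), applied window by window exactly as in Lemma~\ref{lemma:algLowerbound}, yields $\Pr[X_\tau\ge x\mid\sigma]\ge(1-\epsilon)\alpha(1-e^{-g(x)})\ge(1-\epsilon)^2\alpha\Pr[\max_iX_i\ge x]$ for all $x\ge0$. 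Integrating, $\E[X_\tau\mid\sigma]\ge(1-O(\epsilon))\alpha\,\E[\mathrm{OPT}]$ for good $\sigma$; on the bad event I only use $\E[X_\tau\mid\sigma]\ge0$, and since $\E[\mathrm{OPT}]$ is permutation-independent this gives $\E[X_\tau]\ge\Pr[\sigma\text{ good}]\,(1-O(\epsilon))\alpha\,\E[\mathrm{OPT}]\ge(1-O(\epsilon))\alpha\,\E[\mathrm{OPT}]$. Rescaling $\epsilon$ completes the argument, with $c_\epsilon$ depending only on $\epsilon$.

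The routine parts are the concentration bound and the bad-event bookkeeping. The main obstacle is the middle step of the previous paragraph: showing that after conditioning on a good $\sigma$ the process really does decompose, up to $1{\pm}\epsilon$ factors, into $T$ rescaled iid sub-instances to which the $\alpha$-strong inequality of Section~\ref{section:iid} applies. The subtlety is that the aggregate survival factors $q^\sigma_{i-1}$ mix all $T$ windows (they are \emph{not} equal to $h((i-1)/n)$), so each type's local Riemann sum is reweighted by a monotone factor coming from the other windows; one must check that this reweighting does not spoil the window-wise application of Lemma~\ref{lemma:aw} and that the $O(\epsilon')$ slacks from all windows do not accumulate. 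This is exactly the point at which the construction and analysis underlying Theorem~\ref{theorem:bestOrder} have to be shown robust to the perturbations introduced by using a random rather than a hand-picked order.
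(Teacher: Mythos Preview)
Your proposal takes a route that is \emph{not} the paper's, and the key step does not go through.

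\medskip\noindent\textbf{What the paper actually does.} Both Theorem~\ref{theorem:bestOrder} and Theorem~\ref{theorem:randOrder} are proved by \emph{reducing to a single iid instance via chunking}. One fixes $s=m_{\epsilon/3}$, splits the (randomly ordered) sequence into $s$ equal blocks, and uses the negatively-correlated Chernoff bound to show that when $m\ge c_\epsilon\log n$ every block contains at least $(1-\epsilon/3)m/s$ copies of each type w.p.\ $\ge 1-\epsilon/3$. After discarding at most an $\epsilon/3$ fraction per type (Lemma~\ref{lemma:discard}), the sequence is $s$-partitioned, and Lemma~\ref{lemma:partition} applies Algorithm~\ref{alg:largeMarket}: each block is treated as \emph{one} draw from the compound CDF $F=\prod_tF_t$, and Algorithm~\ref{alg:iid} is run on these $s$ iid draws. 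There is no per-type windowing; the $\alpha$-strong inequality is invoked once, on the compound iid instance. In particular, your assertion that your construction ``is essentially the construction behind Theorem~\ref{theorem:bestOrder}'' is incorrect.

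\medskip\noindent\textbf{Why the per-type window algorithm fails.} The obstacle you flag at the end is fatal, not merely technical. On a good $\sigma$ you correctly compute
\[
q_i^\sigma \;\approx\; \hat q(i/n),\qquad \hat q(u)\;=\;\prod_{t=1}^{T}\frac{h(\ell_t+|I_t|u)}{h(\ell_t)}.
\]
For $x$ above every threshold, grouping by type and using that on a good $\sigma$ the positions of type-$t$ copies have density $k_t/n$, you get
\[
\Pr[X_\tau\ge x\mid\sigma]\;\approx\;\sum_t G_t(x)\,k_t\!\int_0^1\hat q(u)\,du\;=\;g(x)\int_0^1\hat q(u)\,du,
\]
and you need this to dominate $\alpha(1-e^{-g(x)})\approx\alpha\,g(x)$. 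Hence the pointwise claim requires $\int_0^1\hat q\ge\alpha$. But with $h(s)=\cos(as)$, $\ln h$ is strictly concave, and a short increment-comparison argument gives $\hat q(u)\le h(u)$ for all $u\in[0,1]$ whenever $T\ge 2$, with strict inequality in the interior; consequently $\int_0^1\hat q<\int_0^1 h=\alpha$. Concretely, with two equal-sized types one computes
\[
\int_0^1\hat q(u)\,du\;=\;\tfrac12+\frac{\cos(a)\tan(a/2)}{a}\;\approx\;0.653\;<\;0.739\;\approx\;\alpha,
\]
so your inequality $\Pr[X_\tau\ge x\mid\sigma]\ge(1-\epsilon)\alpha(1-e^{-g(x)})$ is false in the large-$x$ regime. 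Equivalently, property~(ii) of $\alpha$-strongness already fails for the effective survival curve $\hat q$, so there is no hope of applying property~(iii) ``window by window'': the windows interact multiplicatively and the product curve is strictly below $h$. (Even in the degenerate case $F_1=F_2$ your algorithm differs from Algorithm~\ref{alg:iid} and is strictly worse.)

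\medskip
The concentration piece of your argument is fine, and the idea of conditioning on a ``good'' permutation and then paying the bad-event probability at the end is exactly right. What is missing is a construction whose survival curve on a good $\sigma$ \emph{is} (close to) $h$ itself. The paper achieves this by compounding all types into a single distribution per block; with that reduction in place, the analysis of Section~\ref{section:iid} applies verbatim and no window-by-window argument is needed.
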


To prove the theorems we first provide an algorithm for a specific class of large market instances, namely partitioned sequences. Lemma \ref{lemma:partition} states that this algorithm is $\alpha$-approximation when the number of partitions is large. We later show how to apply this algorithm on the best order and a random order of large market instances to achieve a similar approximation factor. Following is a formal definition of partitioned sequences.

\begin{definition}
	A sequence of items with distribution functions $F_1,\ldots,F_n$ is $m$-partitioned if $n=mk$ and the sequence of functions $F_{ik+1},\ldots,F_{ik+k}$ is a permutation of $F_1,\ldots,F_k$ for every $0\leq i<m$.
\end{definition}

The following algorithm exploits Algorithm \ref{alg:iid} for iid items in order to find thresholds for a partitioned large market instance. 

\begin{algorithm}
	\normalsize
	\caption{}
	\textbf{Input: An $m$-partitioned sequence of items with distribution functions $F_1,\ldots,F_n$.}
	
	\begin{algorithmic}[1] \label{alg:largeMarket}
		\STATE Let $k=n/m$.
		\STATE Let $F(x)=\prod_{i=1}^{k}F_i(x)$.
		\STATE Let $\theta_1\ldots,\theta_m$ be the thresholds by Algorithm \ref{alg:iid} for $m$ iid items with distribution function $F$.
		\STATE Pick the first item $i$ if $X_i\geq \theta_{\lceil i/k \rceil}$.
	\end{algorithmic}
\end{algorithm}

\begin{lemma} \label{lemma:partition}
	For every $\epsilon>0$ there exists a number $m_\epsilon$ (a function of $\epsilon$ and independent of the number of items) such that for every $m\geq m_\epsilon$ Algorithm \ref{alg:largeMarket} is $(1-\epsilon)\alpha$-approximation on an $m$-partitioned input.
\end{lemma}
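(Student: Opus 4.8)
The plan is to reduce the $m$-partitioned instance to an instance of $m$ i.i.d.\ draws from a single distribution and then invoke Theorem~\ref{Theorem:alphaStrong}. For $j\in\{1,\dots,m\}$ let $B_j=\{(j-1)k+1,\dots,jk\}$ be the $j$-th partition and $Y_j=\max_{i\in B_j}X_i$. Since the distribution functions appearing in $B_j$ are a permutation of $F_1,\dots,F_k$, the variables $Y_1,\dots,Y_m$ are i.i.d.\ with common distribution function $F(x)=\prod_{l=1}^{k}F_l(x)$ --- precisely the function passed to Algorithm~\ref{alg:iid} in step~3 of Algorithm~\ref{alg:largeMarket}. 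Hence $\max_i X_i=\max_j Y_j$, so the prophet's expected value on the partitioned instance equals that on $m$ i.i.d.\ copies of $F$, and the thresholds $\theta_1,\dots,\theta_m$ our algorithm applies to the blocks are exactly those Algorithm~\ref{alg:iid} applies to those $m$ i.i.d.\ items. Write $\tau$ for the stopping time of Algorithm~\ref{alg:largeMarket} on the true instance and $\tau'$ for that of Algorithm~\ref{alg:iid} on $Y_1,\dots,Y_m$; by Lemma~\ref{lemma:decThresholds}, $\theta_1>\dots>\theta_m$.

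The crux is the pointwise estimate
$$\Pr[X_\tau\ge x]\ \ge\ (1-c/m)\,\Pr[Y_{\tau'}\ge x]\qquad\text{for every }x\ge 0,$$
where $c$ depends only on $a$ (one may take $c=a/\cos a$). Not stopping in block $j$ means all $k$ of its items fell below $\theta_j$, an event of probability $\prod_{i\in B_j}F_i(\theta_j)=F(\theta_j)$ independently of the order within the block; hence $q_j:=\Pr[\tau>jk]=\prod_{l\le j}F(\theta_l)=\cos(aj/m)=\Pr[\tau'>j]$. Fix $x\ge\theta_m$ and let $j(x)$ satisfy $\theta_{j(x)}>x\ge\theta_{j(x)+1}$. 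For a block $l\le j(x)$ we have $\theta_l\ge\theta_{j(x)}>x$, so whenever the algorithm stops in such a block it takes a value $>x$; these blocks jointly contribute $\sum_{l\le j(x)}(q_{l-1}-q_l)=1-q_{j(x)}$ to both $\Pr[X_\tau\ge x]$ and $\Pr[Y_{\tau'}\ge x]$. For a block $j>j(x)$ we have $\theta_j\le x$, so, conditioned on reaching block $j$ and listing its items in arrival order as $i_1,\dots,i_k$, the probability that the algorithm stops there with a value $\ge x$ equals
$$\sum_{t=1}^{k}\Bigl(\prod_{s<t}F_{i_s}(\theta_j)\Bigr)\bigl(1-F_{i_t}(x)\bigr),$$
since $x\ge\theta_j$ forces the first item above $\theta_j$ to be the one that is $\ge x$. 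Dropping the trailing factors gives $\prod_{s<t}F_{i_s}(\theta_j)\ge\prod_{s=1}^{k}F_{i_s}(\theta_j)=F(\theta_j)$, and a union bound gives $\sum_t\bigl(1-F_{i_t}(x)\bigr)\ge 1-\prod_tF_{i_t}(x)=1-F(x)$, so this per-block probability is at least $F(\theta_j)\bigl(1-F(x)\bigr)$. Since $h(s)=\cos(as)$ is $C^1$ with $h\ge h(1)=\cos a>0$ on $[0,1]$, we get $F(\theta_j)=\cos(aj/m)/\cos(a(j-1)/m)\ge 1-c/m$, so the per-block contribution is at least $(1-c/m)\,\Pr[Y_j\ge x]$, matching the i.i.d.\ run up to the factor $1-c/m$. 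Adding the contributions of all blocks establishes the displayed inequality for $x\ge\theta_m$; for $x<\theta_m$ both sides equal $1-q_m$ (the algorithm can only ever stop at a value exceeding $\theta_m$), so the inequality is trivial there.

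Integrating over $x\ge 0$ promotes the pointwise estimate to $\E[X_\tau]\ge(1-c/m)\,\E[Y_{\tau'}]$. Applying Theorem~\ref{Theorem:alphaStrong} to the $m$ i.i.d.\ items $Y_1,\dots,Y_m$ with slack $\epsilon/2$ gives $\E[Y_{\tau'}]\ge(1-\epsilon/2)\alpha\,\E[\max_j Y_j]$ for all $m$ at least some $m'_\epsilon$ depending only on $\epsilon$, and $\E[\max_j Y_j]=\E[\max_i X_i]$ is the prophet's value. Therefore $\E[X_\tau]\ge(1-c/m)(1-\epsilon/2)\alpha\,\E[\max_i X_i]\ge(1-\epsilon)\alpha\,\E[\max_i X_i]$ as soon as $m\ge m_\epsilon:=\max\{m'_\epsilon,\,2c/\epsilon\}$, and $m_\epsilon$ depends only on $\epsilon$, not on $k$ or $n$.

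I expect the per-block step to be the only genuine obstacle. Upon stopping inside a block the algorithm must take the \emph{first} item above the block threshold rather than the block maximum $Y_j$, so it is strictly weaker than Algorithm~\ref{alg:iid} run on $Y_1,\dots,Y_m$. The point is to show this costs only a $\bigl(1-O(1/m)\bigr)$ multiplicative factor, and --- crucially --- that this holds \emph{uniformly in the block size} $k$, which may be arbitrarily large. The two elementary steps above (deleting trailing factors from a product of distribution functions, and a union bound) are exactly what makes all $k$-dependence disappear, leaving a $1/m$ loss that stems only from the ratio of consecutive values of $\cos(as)$.
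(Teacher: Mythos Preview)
Your proof is correct and follows essentially the same route as the paper: both reduce to $m$ i.i.d.\ block maxima $Y_j$ with distribution $F=\prod F_i$, verify that the prophet values agree, invoke the i.i.d.\ result on the $Y_j$'s, and then establish the pointwise bound $\Pr[X_\tau\ge x]\ge(1-O(1/m))\Pr[Y_{\tau'}\ge x]$ via the same key inequality $\prod_{s<t}F_{i_s}(\theta_j)\ge F(\theta_j)\ge 1-O(1/m)$. The only cosmetic differences are that the paper treats all blocks uniformly via $r=\max\{x,\theta_{i+1}\}$ and telescopes (rather than splitting at $j(x)$ and invoking the union bound), and it quotes the slightly sharper constant $a\tan(a)$ in place of your $a/\cos a$.
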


Now we are ready to prove Theorem \ref{theorem:bestOrder} and Theorem \ref{theorem:randOrder}.

\begin{proofof} {Theorem \ref{theorem:bestOrder}}
	Let $s$ be the lower bound on the number of partitions in Lemma \ref{lemma:partition} for $\epsilon/2$, and let $m_\epsilon=2(s-1)/\epsilon$. The outline of the proof is as follows. Let $\mathbb{X}$ be an $m$-frequent set of items for $m\geq m_\epsilon$. We uniformly group the items into $s$ parts with $\lfloor m/s \rfloor$ items of each type in every group. Let $\mathbb{Y}$ denote the set of partitioned items. In order to make all parts similar, we may need to discard some of the items, however, we show this does not hurt the approximation factor significantly. Finally, by applying Algorithm \ref{alg:largeMarket} to $\mathbb{Y}$ we achieve the desired approximation factor.
	
	The following lemma shows that discarding a fraction of items influences the approximation factor proportionally.
	\begin{lemma} \label{lemma:discard}
		Let $\{X_1,\ldots,X_n\}$ be a $k$-frequent set of items. Suppose for some $S\subseteq\{1,\ldots,n\}$ that the set $\{X_{S_1},\ldots,X_{S_r}\}$ is $p$-frequent and contains every $X_i$ for $1\leq i\leq n$. Then we have
		$$\E[\max_{i\in S}X_i] \geq \frac{p}{k}\E[\max_{1\leq i\leq n}X_i] \enspace .$$
	\end{lemma}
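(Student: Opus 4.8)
The plan is to prove the bound by comparing the two tail probabilities $\Pr[\,\cdot\ge x]$ and then integrating over $x$. Since every nonnegative random variable $Y$ satisfies $\E[Y]=\int_0^\infty \Pr[Y\ge x]\,dx$, to prove Lemma~\ref{lemma:discard} it suffices to show that for every $x\ge 0$
$$\Pr[\max_{i\in S}X_i\ge x]\ \ge\ \frac{p}{k}\,\Pr[\max_{1\le i\le n}X_i\ge x]\enspace .$$
By independence, $\Pr[\max_{i\in T}X_i<x]=\prod_{i\in T}F_i(x)$ for any index set $T$. I would group the items by distribution function: let $G_1,\dots,G_\ell$ be the distinct distribution functions occurring among $X_1,\dots,X_n$, each occurring exactly $k$ times (this is the regime in which the lemma is applied), and let $m_j$ be the number of copies of $G_j$ kept in $\{X_{S_1},\dots,X_{S_r}\}$; since $S$ is $p$-frequent and contains every distribution, $m_j\ge p$ for all $j$. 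Writing $t:=\prod_{i=1}^n F_i(x)=\big(\prod_{j=1}^{\ell}G_j(x)\big)^{k}\in[0,1]$, one then has
$$\Pr[\max_{i\in S}X_i<x]=\prod_{j=1}^{\ell}G_j(x)^{m_j}\ \le\ \prod_{j=1}^{\ell}G_j(x)^{p}=\Big(\prod_{j=1}^{\ell}G_j(x)\Big)^{p}=t^{p/k}\enspace ,$$
where the inequality uses $0\le G_j(x)\le 1$ and $m_j\ge p$.

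The remaining step is to compare $1-t^{p/k}$ with $\tfrac{p}{k}(1-t)$. Here I would invoke the elementary inequality $1-t^{\beta}\ge\beta(1-t)$, valid for all $\beta\in(0,1]$ and $t\in[0,1]$: the function $t\mapsto t^{\beta}$ is concave on $[0,1]$, hence lies below its tangent line at $t=1$, i.e.\ $t^{\beta}\le 1+\beta(t-1)$, which rearranges to the claim. Applying it with $\beta=p/k$ gives, for every $x\ge0$,
$$\Pr[\max_{i\in S}X_i\ge x]=1-\prod_{j=1}^{\ell}G_j(x)^{m_j}\ \ge\ 1-t^{p/k}\ \ge\ \frac{p}{k}(1-t)=\frac{p}{k}\,\Pr[\max_{1\le i\le n}X_i\ge x]\enspace .$$
Integrating over $x\in[0,\infty)$ then yields $\E[\max_{i\in S}X_i]\ge\frac{p}{k}\E[\max_{1\le i\le n}X_i]$, which is the statement.

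There is no deep obstacle in this argument; the two points that need care are bookkeeping ones. The first, and the one I expect to be most delicate, is the alignment of multiplicities: the step $\prod_{j}G_j(x)^{m_j}\le t^{p/k}$ relies on every distribution occurring the same number $k$ of times in the full collection, so that $\prod_j G_j(x)$ can be raised to the power $p$ and rewritten as $t^{p/k}$; this is exactly the setting in which the lemma is used, where the ambient collection is balanced and $S$ retains at least a $p/k$ fraction of the copies of each distribution, so the frequency hypotheses must be invoked in this form rather than purely as ``min-frequency $\ge k$, resp.\ $\ge p$''. The second point is the routine one that $\E[Y]=\int_0^\infty\Pr[Y\ge x]\,dx$ requires $Y\ge0$ (true here) and finiteness of $\E[\max_i X_i]$ (which holds under the paper's standing assumptions). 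The only genuinely analytic ingredient is the concavity bound $1-t^{\beta}\ge\beta(1-t)$, which is standard.
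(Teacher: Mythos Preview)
Your argument is correct and takes a genuinely different route from the paper's. The paper decomposes according to the index $\rho$ of the overall maximum: it writes $\E[\max_i X_i]=\sum_i \E[X_i\,\mathbf{1}[i=\rho]]$, observes that $\E[\max_{i\in S}X_i]\ge \E[X_\rho\,\mathbf{1}[\rho\in S]]=\sum_{i\in S}\E[X_i\,\mathbf{1}[i=\rho]]$, and then uses exchangeability within each distribution type to conclude that the sum over $S$ captures at least a $p/k$ fraction of the full sum. You instead establish the stronger pointwise tail bound $\Pr[\max_{i\in S}X_i\ge x]\ge\frac{p}{k}\Pr[\max_i X_i\ge x]$ via the product form of the CDF and the concavity inequality $1-t^\beta\ge\beta(1-t)$. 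Your route is more elementary (no conditioning, no symmetry argument) and yields a stochastic-dominance-type conclusion that meshes naturally with the tail-comparison machinery used elsewhere in the paper; the paper's route is terser but leans on continuity of the distributions to make the exchangeability step clean.

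Your caveat about multiplicities is well taken and in fact applies equally to the paper's own proof: the step ``the sum over $S$ is at least $p/k$ of the full sum'' in the exchangeability argument likewise needs that, for each type, the fraction of copies retained in $S$ is at least $p/k$. This holds when every type has exactly $k$ copies, or more generally when at most $k-p$ copies of each type are discarded --- which is precisely the situation in both places the lemma is invoked. So your reading of the intended hypothesis is the right one.
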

	
	Note that in partitioning $\mathbb{X}$ to $s$ groups there might be at most $s-1$ items of each type being discarded in $\mathbb{Y}$, therefore $\mathbb{Y}$ is $(m-s+1)$-frequent. Let $ALG$ be a random variable that denotes the value of the item picked by our algorithm. We have:
	\begin{align*}
		\E[ALG] &\geq (1-\frac{\epsilon}{2})\alpha \E[\max_{Y\in\mathbb{Y}}Y] & \text{Lemma} \enspace \ref{lemma:partition} \\
		&\geq (1-\frac{\epsilon}{2})\alpha \frac{m-s+1}{m}\E[\max_{X\in\mathbb{X}}X] & \text{Lemma} \enspace \ref{lemma:discard} \\
		&\geq (1-\frac{\epsilon}{2})^2\alpha \E[\max_{X\in\mathbb{X}}X] & \\
		&\geq (1-\epsilon)\alpha \E[\max_{X\in\mathbb{X}}X] \enspace .
	\end{align*}
	Therefore, for every $m$-frequent set $\mathbb{X}$ there exists an ordering of its items  on which our algorithm is $(1-\epsilon)\alpha$-approximation.
\end{proofof}

\begin{proofof}{Theorem \ref{theorem:randOrder}}
	Let $\pi$ be a random permutation of the items. Consider $s$ different partitions for the items, i.e. one from $X_{\pi_1}$ to $X_{\pi_{n/s}}$, one from $X_{\pi_{n/s+1}}$ to $X_{\pi_{2 n/s}}$, so on so forth. We show that when the number of similar items is large enough then a random permutation is very likely to uniformly distribute similar items into these parts. Therefore, by discarding a small fraction of the items $X_{\pi_1},\ldots,X_{\pi_n}$ can be assumed as an $s$-partitioned sequence, hence Algorithm \ref{alg:largeMarket} can be applied to it.
	
	Note that $\mathbb{X}$ is $m$-frequent, which means that for every item $i$ there are at least $m-1$ other items with the same distribution functions as $F_i$. We refer to a set of similar items as a type. Therefore, there are at least $m$ items of every type in $\mathbb{X}$. We use the following lemma to show for every type that with a high probability the number of items of that type in every partition is almost $m/s$.
	
	\begin{lemma} [\cite{panconesi1997randomized}] \label{lemma:chernoff}
		Let $x_1,\ldots,x_m$ be a sequence of negatively correlated boolean (i.e. 0 or 1) random variables, and let $X=\sum_{i=1}^{m} x_i$. We have:
		$$Pr[|X-\E[X]|\geq \delta \E[X]] \leq 3\exp(-\frac{\delta^2\E[X]}{3})\enspace .$$
	\end{lemma}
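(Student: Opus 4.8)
The plan is to reduce the statement to the classical multiplicative Chernoff--Hoeffding bound, the only nonstandard ingredient being that negative correlation of indicator variables still lets us dominate the moment generating function of their sum by the product of the individual moment generating functions. Write $\mu=\E[X]=\sum_{i=1}^m p_i$ with $p_i=\Pr[x_i=1]$. For the upper tail, fix $t>0$ and apply Markov's inequality to $e^{tX}$:
$$\Pr[X\ge(1+\delta)\mu]\le e^{-t(1+\delta)\mu}\,\E\!\left[\prod_{i=1}^m e^{tx_i}\right].$$
The negative-correlation hypothesis, in the form used by Panconesi and Srinivasan, yields $\E[\prod_i e^{tx_i}]\le\prod_i\E[e^{tx_i}]$, and since $x_i\in\{0,1\}$ we have $\E[e^{tx_i}]=1+p_i(e^t-1)\le\exp(p_i(e^t-1))$. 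Hence the right-hand side is at most $\exp\!\big(-t(1+\delta)\mu+(e^t-1)\mu\big)$; choosing $t=\ln(1+\delta)$ and using $(1+\delta)\ln(1+\delta)-\delta\ge\delta^2/3$ for $0\le\delta\le1$ gives $\Pr[X\ge(1+\delta)\mu]\le\exp(-\mu\delta^2/3)$.

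For the lower tail I would run the mirror-image argument: apply Markov to $e^{tX}$ with $t<0$, factor $\E[\prod_i e^{tx_i}]\le\prod_i\E[e^{tx_i}]$ using the corresponding negative-correlation bound, optimize over $t$, and invoke $(1-\delta)\ln(1-\delta)+\delta\ge\delta^2/2$ to obtain $\Pr[X\le(1-\delta)\mu]\le\exp(-\mu\delta^2/2)$. Adding the two tail estimates and using $\exp(-\mu\delta^2/2)\le\exp(-\mu\delta^2/3)$ gives, in the regime $0<\delta\le1$ that matters for our application,
$$\Pr[\,|X-\mu|\ge\delta\mu\,]\le 2\exp(-\mu\delta^2/3)\le 3\exp(-\mu\delta^2/3),$$
with the constant $3$ left deliberately loose.

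The one genuinely delicate point is justifying the factorization $\E[\prod_i e^{tx_i}]\le\prod_i\E[e^{tx_i}]$ from ``negative correlation'': pairwise negative correlation by itself does not suffice, and what is actually needed is the stronger property that $\E[\prod_{i\in S}x_i]\le\prod_{i\in S}p_i$ for every $S\subseteq\{1,\dots,m\}$ (together with the complementary inequality for the variables $1-x_i$, which drives the lower tail). This is precisely the hypothesis under which Panconesi and Srinivasan~\cite{panconesi1997randomized} establish the bound, so I would simply cite their result. In our intended use the indicators counting how many items of a fixed type land in a given part of a uniformly random partition are negatively associated, hence satisfy this stronger condition, so the lemma applies as stated; everything else is the elementary exponent optimization sketched above.
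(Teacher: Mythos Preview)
The paper does not prove this lemma at all: it is stated as a citation of Panconesi and Srinivasan and used as a black box in the proof of Theorem~\ref{theorem:randOrder}. Your sketch is a correct outline of the standard argument behind that cited result, and you are right to flag that the factorization of the moment generating function requires the stronger hypothesis $\E[\prod_{i\in S}x_i]\le\prod_{i\in S}p_i$ (and its complement) rather than mere pairwise negative correlation; this is exactly the setting of the cited reference, and it does hold for the occupancy indicators arising from a uniformly random permutation. So there is nothing to compare against in the paper itself, and what you wrote is an accurate reconstruction of the source argument.
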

	
	Since $\pi$ is a random permutation, the expected number of these items in a fixed partition is $m/s$. Using Lemma \ref{lemma:chernoff}, with probability at most $3\exp(\frac{-\delta^2 m}{3 s})$ there are less than $(1-\delta)m/s$ of these items in a fixed partition. Using Union Bound on all the $s$ partitions and all types of items (note that there at at most $n/m$ types), with probability at most $3 s \frac{n}{m}\exp(\frac{-\delta^2 m}{3 s})$ there is a type of item  which has less than $(1-\delta)m/s$ items in a partitions. If we choose $\delta=\epsilon/3$ then for every $m\geq \frac{3 s}{\delta^2}(\log(n)+\log(\frac{9}{\epsilon}))$ this probability becomes less than $\epsilon/3$.
	
	Now we are ready to wrap up the proof. If we choose $s=m_{\epsilon/3}$ using Lemma \ref{lemma:partition}, $\delta=\epsilon/3$, and $c_\epsilon=\frac{3s\log(9/\epsilon)}{\delta^2}$ then for every $m\geq c_\epsilon\log(n)$ with probability at least $(1-\epsilon/3)$ there are at least $(1-\epsilon/3)m/s$ items of each type in every partitions. In such cases by discarding at most $\epsilon/3$ fraction of the items of each type we have exactly $(1-\epsilon/3)m/s$ of them in each partition. Lemma \ref{lemma:discard} states that removing this fraction of items changes the approximation factor by at most $(1-\epsilon/3)$. This means that for a random permutation of the items, with probability at least $(1-\epsilon/3)$ we can loose on the approximation factor by no worse than $(1-\epsilon/3)$ and have an $s$-partitioned sequence. Due to Lemma \ref{lemma:partition}, Algorithm \ref{alg:largeMarket} is $(1-\epsilon/3)\alpha$-approximation on this number of partitions. Therefore, the approximation factor of our method is $(1-\epsilon/3)^3\alpha$ which is more than $(1-\epsilon)\alpha$.
\end{proofof}

\section{Conclusions and Open Problems}
In this paper we demonstrate a simple algorithm for the iid prophet inequality problem. We analyze our algorithm through a class of functions called $\alpha$-strong, and show that the set of thresholds based on such functions guarantee for every $x$ that the probability of the algorithm picking at least $x$ is no less that $\alpha$ times the probability of the maximum being at least $x$. This simply yields the approximation factor of $\alpha$ for the expectations. Finally by proposing a 0.738-strong function we complete the proof.

One question that arises here is whether this is the best achievable approximation factor for iid prophet inequality. We can approach this problem from two directions, i.e. the lower bound and the upper bound. We believe that our $\cos(a s)$ function is not the strongest due to the gap that exists in meeting property iii of strong functions. Although such a gap suggests the existence of a stronger threshold function, it is interesting to find one that has a closed form representation.



It is also worth noting that the gap does not seem to be large. In other words, Hill an Kertz~\cite{hill1982comparisons} showed using a computer program that the best approximation one can get is 0.748 for $n=100$ and 0.745 for $n=1000$. They conjecture the bound of $\frac{1}{1+1/e}=0.731$ for infinitely large $n$, which is refuted by our results. However, another interesting question is whether for every $n$ there is a distribution function that bounds the performance of any online algorithm for iid prophet inequality.

The main question that we leave open in this paper is whether one can beat the $1-1/e$ barrier for general distributions, namely the prophet secretary problem introduced in \cite{esfandiari2015prophet}. A direction to solve this problem could be similar to our approach for large market inputs. In other words, does there exist a black-box reduction from general distributions to iid instances which preserves the approximation factor? Finally, we would like to note that our approach for iid instances seems to become more complicated for different distributions, hence finding a simpler solution for iid prophet inequality that beats $1-1/e$ would be interesting as well.





\newcommand{\Proc}{Proceedings of the~}
\newcommand{\STOC}{Annual ACM Symposium on Theory of Computing (STOC)}
\newcommand{\FOCS}{IEEE Symposium on Foundations of Computer Science (FOCS)}
\newcommand{\SODA}{Annual ACM-SIAM Symposium on Discrete Algorithms (SODA)}
\newcommand{\SOCG}{Annual Symposium on Computational Geometry (SoCG)}
\newcommand{\ICALP}{Annual International Colloquium on Automata, Languages and Programming (ICALP)}
\newcommand{\ESA}{Annual European Symposium on Algorithms (ESA)}
\newcommand{\CCC}{Annual IEEE Conference on Computational Complexity (CCC)}
\newcommand{\RANDOM}{International Workshop on Randomization and Approximation Techniques in Computer Science (RANDOM)}
\newcommand{\PODS}{ACM SIGMOD Symposium on Principles of Database Systems (PODS)}
\newcommand{\SSDBM}{ International Conference on Scientific and Statistical Database Management (SSDBM)}
\newcommand{\ALENEX}{Workshop on Algorithm Engineering and Experiments (ALENEX)}
\newcommand{\BEATCS}{Bulletin of the European Association for Theoretical Computer Science (BEATCS)}
\newcommand{\CCCG}{Canadian Conference on Computational Geometry (CCCG)}
\newcommand{\CIAC}{Italian Conference on Algorithms and Complexity (CIAC)}
\newcommand{\COCOON}{Annual International Computing Combinatorics Conference (COCOON)}
\newcommand{\COLT}{Annual Conference on Learning Theory (COLT)}
\newcommand{\COMPGEOM}{Annual ACM Symposium on Computational Geometry}
\newcommand{\DCGEOM}{Discrete \& Computational Geometry}
\newcommand{\DISC}{International Symposium on Distributed Computing (DISC)}
\newcommand{\ECCC}{Electronic Colloquium on Computational Complexity (ECCC)}
\newcommand{\FSTTCS}{Foundations of Software Technology and Theoretical Computer Science (FSTTCS)}
\newcommand{\ICCCN}{IEEE International Conference on Computer Communications and Networks (ICCCN)}
\newcommand{\ICDCS}{International Conference on Distributed Computing Systems (ICDCS)}
\newcommand{\VLDB}{ International Conference on Very Large Data Bases (VLDB)}
\newcommand{\IJCGA}{International Journal of Computational Geometry and Applications}
\newcommand{\INFOCOM}{IEEE INFOCOM}
\newcommand{\IPCO}{International Integer Programming and Combinatorial Optimization Conference (IPCO)}
\newcommand{\ISAAC}{International Symposium on Algorithms and Computation (ISAAC)}
\newcommand{\ISTCS}{Israel Symposium on Theory of Computing and Systems (ISTCS)}
\newcommand{\JACM}{Journal of the ACM}
\newcommand{\LNCS}{Lecture Notes in Computer Science}
\newcommand{\RSA}{Random Structures and Algorithms}
\newcommand{\SPAA}{Annual ACM Symposium on Parallel Algorithms and Architectures (SPAA)}
\newcommand{\STACS}{Annual Symposium on Theoretical Aspects of Computer Science (STACS)}
\newcommand{\SWAT}{Scandinavian Workshop on Algorithm Theory (SWAT)}
\newcommand{\TALG}{ACM Transactions on Algorithms}
\newcommand{\UAI}{Conference on Uncertainty in Artificial Intelligence (UAI)}
\newcommand{\WADS}{Workshop on Algorithms and Data Structures (WADS)}
\newcommand{\SICOMP}{SIAM Journal on Computing}
\newcommand{\JCSS}{Journal of Computer and System Sciences}
\newcommand{\JASIS}{Journal of the American society for information science}
\newcommand{\PMS}{ Philosophical Magazine Series}
\newcommand{\ML}{Machine Learning}
\newcommand{\DCG}{Discrete and Computational Geometry}
\newcommand{\TODS}{ACM Transactions on Database Systems (TODS)}
\newcommand{\PHREV}{Physical Review E}
\newcommand{\NATS}{National Academy of Sciences}
\newcommand{\MPHy}{Reviews of Modern Physics}
\newcommand{\NRG}{Nature Reviews : Genetics}
\newcommand{\BullAMS}{Bulletin (New Series) of the American Mathematical Society}
\newcommand{\AMSM}{The American Mathematical Monthly}
\newcommand{\JAM}{SIAM Journal on Applied Mathematics}
\newcommand{\JDM}{SIAM Journal of  Discrete Math}
\newcommand{\JASM}{Journal of the American Statistical Association}
\newcommand{\AMS}{Annals of Mathematical Statistics}
\newcommand{\JALG}{Journal of Algorithms}
\newcommand{\TIT}{IEEE Transactions on Information Theory}
\newcommand{\CM}{Contemporary Mathematics}
\newcommand{\JC}{Journal of Complexity}
\newcommand{\TSE}{IEEE Transactions on Software Engineering}
\newcommand{\TNDE}{IEEE Transactions on Knowledge and Data Engineering}
\newcommand{\JIC}{Journal Information and Computation}
\newcommand{\ToC}{Theory of Computing}
\newcommand{\Algorithmica}{Algorithmica}
\newcommand{\MST}{Mathematical Systems Theory}
\newcommand{\Com}{Combinatorica}
\newcommand{\NC}{Neural Computation}
\newcommand{\TAP}{The Annals of Probability}

\bibliographystyle{abbrv}
\bibliography{References}

\appendix
\newpage
\section{Appendix: Omitted Proofs} \label{appendix1}

\subsection{Proof of Lemma \ref{lemma:OPTupperbound}}
	\begin{proof}
		For every $x\in\mathcal{R}^{\geq 0}$ we have:
		\begin{align}
		Pr[\max X_i\geq x] &= 1-F(x)^n \nonumber \\
		&= 1-(1-G(x))^n \nonumber \\
		&= 1-\bigg(\frac{1}{1-G(x)}\bigg)^{-n} \nonumber \\
		&= 1-\bigg(1+\frac{G(x)}{1-G(x)}\bigg)^{-n} \nonumber \\
		&\leq 1-\exp\bigg(\frac{-nG(x)}{1-G(x)}\bigg)\enspace .
		\end{align}
		We complete the proof by proving for every $\epsilon> 0$ that there exists an $n_\epsilon$ such that for every $n\geq n_\epsilon$ and $0\leq z\leq 1$, the ratio between $A(n,z)=1-\exp(-nz/(1-z))$ and $B(n, z)=1-\exp(-nz)$ is no more than $1/(1-\epsilon)$.
		
		For every $n$ and $z$ there are two cases:
		\begin{itemize}
			\item If $ln(n)/n\leq z\leq 1$ then we have:
			\begin{align}
			\frac{A(n,z)}{B(n,z)} \leq \frac{1}{B(n,z)} \leq \frac{1}{1-\exp(-ln(n))} = \frac{1}{1-1/n}\enspace .\label{ineq:nz1}
			\end{align}
			\item If $0\leq z\leq ln(n)/n$, we use partial derivatives of the functions to find an upper bound of their ratio. In the following the derivative of a function is with respect to variable $z$.
			\begin{align}
			\frac{A(n,z)}{B(n,z)}
			&= \frac{\int_{0}^{z}A'(n, w)dw}{\int_{0}^{z}B'(n, w)dw} \nonumber\\
			&= \frac{\int_{0}^{z}B'(n, w)\frac{A'(n, w)}{B'(n, w)}dw}{\int_{0}^{z}B'(n, w)dw} \nonumber\\
			&\leq \frac{\int_{0}^{z}B'(n, w)dw}{\int_{0}^{z}B'(n, w)dw}.\max_{0\leq w\leq z}\bigg\{\frac{A'(n, w)}{B'(n, w)}\bigg\} \nonumber\\
			&= \max_{0\leq w\leq z}\bigg\{\frac{A'(n, w)}{B'(n, w)}\bigg\} \nonumber\\
			&= \max_{0\leq w\leq z}\bigg\{\frac{n \exp(-nz/(1-z))/(1-z)^2}{n\exp(-nz)}\bigg\} \nonumber\\
			&= \max_{0\leq w\leq z}\bigg\{\frac{\exp(-nz^2/(1-z))}{(1-z)^2}\bigg\} \nonumber\\
			&\leq \frac{1}{(1-z)^2} \nonumber\\
			&\leq \frac{1}{1-2\ln(n)/n+ln^2(n)/n^2}\enspace . \label{ineq:nz2}
			\end{align}
		\end{itemize}
		Note that the denominators of both \eqref{ineq:nz1} and \eqref{ineq:nz2} become greater than $1-\epsilon$ as $n$ becomes greater than some $n_\epsilon$, thus the proof of the lemma follows.
	\end{proof}
	
	\subsection{Proof of Lemma \ref{lemma:algLowerbound}}
	\begin{proof}
		Recall that $s(x)=j(x)/n$ is a number from 0 to 1. We prove the correctness of the lemma by analyzing it for two different ranges of $s(x)$. For simplicity we may refer to $s(x)$ as $s$ in different parts of the proof. Suppose $s_0=min(0.5, \alpha)\epsilon$. For $0\leq s\leq s_0$ we have:
		\begin{align}
		1-h(s)+nG(x)\int_{s}^{1}h(r)dr
		&\geq nG(x)\int_{s}^{1}h(r)dr \nonumber \\
		&\geq nG(x)\int_{s_0}^{1}h(r)dr \nonumber \\
		&= nG(x)(\int_{0}^{1}h(r)dr-\int_{0}^{s_0}h(r)dr) \nonumber \\
		&\geq nG(x)(\int_{0}^{1}h(r)dr-s_0)	\label{ineq:simpleCase}\enspace.
		\end{align}
		From the second property of $\alpha$-strong functions we have $\int_{0}^{1}h(r)dr\geq \alpha$. Also, for every $z\in \mathcal{R}^{\geq 0}$ it holds that $z\geq 1-\exp(-z)$. By using these two inequalities in Inequality \eqref{ineq:simpleCase} the lemma is proved for this case:
		\begin{align*}
		1-h(s)+nG(x)\int_{s}^{1}h(r)dr
		&\geq nG(x)(\int_{0}^{1}h(r)dr-s_0) &\\
		&\geq (1-\exp(-nG(x)))(\alpha-s_0) \nonumber & \\
		&\geq (1-\exp(-nG(x)))\alpha(1-\frac{s_0}{\alpha})\nonumber\\
		&\geq (1-\exp(-nG(x)))\alpha(1-\epsilon)\nonumber & s_0\leq \alpha\epsilon
		\end{align*}
		Now what remains is the case that $s_0<s\leq 1$. Again, for this case we want the following inequality to hold:
		\begin{align}
		\frac{1-h(s)+nG(x)\int_{s}^{1}h(r)dr}{1-\exp(-nG(x))}\geq (1-\epsilon)\alpha\enspace . \label{ineq:mainIneq}
		\end{align}
		Recall that for every $x\geq \theta_n$, $s(x)=j(x)/n$ where $j(x)$ is the greatest index for which $\theta_j>x\geq \theta_{j+1}$. Since $G$ is a strictly decreasing function, we have $G(\theta_j)<G(x)\leq G(\theta_{j+1})$. Recall that for every $1\leq i\leq n$ we have $q_i=q_{i-1}(1-G(\theta_i))$, or equivalently $G(\theta_i)=1-q_i/q_{i-1}$. Therefore we can bound $G(x)$ as follows:
		\begin{align}
		1-\frac{q_j}{q_{j-1}} < G(x) \leq 1-\frac{q_{j+1}}{q_j}\enspace . \label{ineq:boundsGx}
		\end{align}
		Now, finding a lower bound for $1-q_j/q_{j-1}$ and an upper bound for $1-q_{j+1}/q_j$ in Inequality \eqref{ineq:boundsGx} gives us a lower bound and an upper bound for $G(x)$. For the lower bound we have
		\begin{align*}
			G(\theta_j)&=1-\frac{q_j}{q_{j-1}} \\ &=\frac{q_{j-1}-q_j}{q_{j-1}}\\ &=\frac{-(q_j-q_{j-1})}{q_{j-1}}\\ &=\frac{-(h(s)-h(s-1/n))}{h(s-1/n)} \enspace .
		\end{align*}
		By multiplying this fraction by $n/n$ we get:
		\begin{align}
		G(\theta_j)=\frac{-(\frac{h(s)-h(s-1/n)}{1/n})}{n\enspace h(s-1/n)}\geq \frac{-h'(s-1/n)}{n\enspace h(s-1/n)} \enspace , \label{ineq:derivative1}
		\end{align}
		where the last inequality in \eqref{ineq:derivative1} comes from the concavity of $h$. From the second property of threshold functions there exists some $\delta_0\leq s_0$ such that for every $n\geq 1/\delta_0, s_0+1/n\leq s\leq 1$ the following inequality holds:
		\begin{align}
		\frac{-h'(s-1/n)}{h(s-1/n)}\geq (1-s_0)\frac{-h'(s)}{h(s)}\enspace . \label{ineq:derivative2}
		\end{align}
		By using Inequality \eqref{ineq:derivative2} in Inequality \eqref{ineq:derivative1}, and by using that inequality in Inequality \eqref{ineq:boundsGx}, we get:
		\begin{align}
		G(x)>(1-s_0)\frac{-h'(s)}{n\enspace h(s)}\enspace . \label{ineq:lowerbound}
		\end{align}
		Similarly one can show the following upper bound on $G(x)$:
		\begin{align}
		G(x) &\leq G(\theta_{j+1}) & \nonumber \\
		&= 1-\frac{q_{j+1}}{q_{j}} & \nonumber \\
		&= \frac{-(h(s+1/n)-h(s))}{h(s)} & \nonumber \\
		&= \frac{-\frac{h(s+1/n)-h(s)}{1/n}}{n\enspace h(s)} & \text{multiplying by $\frac{n}{n}$} \nonumber \\
		&\leq \frac{-\frac{h(s+1/n)-h(s)}{1/n}}{n\enspace h(s+1/n)} & \text{since $h$ is decreasing} \nonumber \\
		&\leq \frac{-h'(s+1/n)}{n\enspace h(s+1/n)} \label{ineq:useful} & \text{concavity of $h$}\\
		&\leq \frac{1}{1-s_0}.\frac{-h'(s)}{n\enspace h(s)}\enspace. & \text{threshold functions (ii)} \label{ineq:upperbound}
		\end{align}
		Using these bounds and the following auxiliary lemma we prove the correctness of Inequality \eqref{ineq:mainIneq}.
		\begin{lemma} \label{lemma:auxiliary}
			For every $z<0$ and $t\geq 1$ we have: $(1-\exp(zt))/(1-\exp(z))\leq t$.
		\end{lemma}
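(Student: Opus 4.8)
The plan is to reduce the claimed bound to an elementary concavity property of the map $g(s) = 1 - \exp(-s)$. First I would use that $z < 0$ implies $\exp(z) < 1$, so the denominator $1 - \exp(z)$ is strictly positive; clearing it turns the desired inequality $(1 - \exp(zt))/(1 - \exp(z)) \leq t$ into the equivalent statement $1 - \exp(zt) \leq t\,(1 - \exp(z))$. Substituting $s = -z > 0$, this is exactly $g(ts) \leq t\, g(s)$ for $t \geq 1$.

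Next I would verify that $g$ is concave on $[0,\infty)$ --- its second derivative is $-\exp(-s) < 0$ --- and that $g(0) = 0$. For any concave function vanishing at $0$ and any $t \geq 1$, one has $g(ts) \leq t\, g(s)$: writing $s = \tfrac 1 t (ts) + (1 - \tfrac 1 t)\cdot 0$ and applying the definition of concavity gives $g(s) \geq \tfrac 1 t g(ts) + (1 - \tfrac 1 t) g(0) = \tfrac 1 t g(ts)$, which rearranges to the claim. Plugging $s = -z$ back in and dividing by $1 - \exp(z)$ finishes the proof. A self-contained alternative, if one prefers not to phrase it through concavity, is to fix $z$ and set $f(t) = t\,(1 - \exp(z)) - (1 - \exp(zt))$: then $f(1) = 0$, $f''(t) = z^2 \exp(zt) > 0$, and $f'(1) = 1 - (1-z)\exp(z) \geq 0$ by the standard inequality $\exp(-z) \geq 1 - z$; so $f'$ is increasing and nonnegative on $[1,\infty)$, hence $f$ is nondecreasing there and $f(t) \geq f(1) = 0$.

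I do not expect a genuine obstacle in this lemma; it is a short calculus exercise. The only points requiring a moment of care are orienting the inequality correctly when multiplying through by the positive quantity $1 - \exp(z)$, and checking the sign of $f'(1)$ (equivalently, the direction of the concavity inequality for $t \geq 1$ rather than $t \leq 1$), which in both routes comes down to the tangent-line bound $1 + x \leq \exp(x)$.
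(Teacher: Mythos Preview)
Your proof is correct, and both routes you sketch are valid. The concavity argument is clean: clearing the positive denominator, substituting $s=-z>0$, and invoking $g(s)\geq\tfrac1t g(ts)$ for the concave function $g(s)=1-e^{-s}$ with $g(0)=0$ is exactly right. The alternative via $f(t)$ is also fine; your computation $f'(1)=1-(1-z)e^{z}\geq 0$ reduces to $e^{-z}\geq 1-z$, which is the standard tangent-line bound.

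The paper takes a different route. It writes $A(z)=1-e^{zt}$ and $B(z)=1-e^{z}$, expresses the ratio as $\frac{\int_0^z A'(w)\,dw}{\int_0^z B'(w)\,dw}$, and bounds this weighted average by $\max_{z\leq w\leq 0}\frac{A'(w)}{B'(w)}=\max_{z\leq w\leq 0} t\,e^{w(t-1)}\leq t$. This is essentially a Cauchy-mean-value style argument, and the paper uses the same device earlier in bounding $A(n,z)/B(n,z)$ in Lemma~\ref{lemma:OPTupperbound}, so there is some economy in reusing the technique. Your concavity argument, by contrast, is more direct and self-contained for this particular inequality, and avoids the mild sign-tracking needed when integrating from $0$ to a negative endpoint. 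Either approach settles the lemma in a couple of lines.
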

		\begin{proof}
			Let $A(z)=1-\exp(zt)$ and $B(z)=1-\exp(z)$. In the following the derivatives are with respect to $z$. For the ratio of these functions we have:
			\begin{align*}
			\frac{A(z)}{B(z)} &= \frac{\int_{0}^{z}A'(w)dw}{\int_{0}^{z}B'(w)dw} \\
			&=\frac{\int_{0}^{z}B'(w)\frac{A'(w)}{B'(w)}dw}{\int_{0}^{z}B'(w)dw}\\
			&\leq \max_{z\leq w\leq 0}\bigg\{\frac{A'(w)}{B'(w)}\bigg\}\frac{\int_{0}^{z}B'(w)dw}{\int_{0}^{z}B'(w)dw}\\
			&\leq \max_{z\leq w\leq 0}\bigg\{\frac{-t\exp(tw)}{-\exp(w)}\bigg\} \\
			&=\max_{z\leq w\leq 0}\bigg\{t\exp(w(t-1))\bigg\} \enspace .
			\end{align*}
			Since $w\leq 0$ and $t\geq 1$ the $\exp(w(t-1))\leq 1$, and the proof follows.
		\end{proof}
		Using the bound of Inequality \eqref{ineq:lowerbound} in the left hand side of Inequality \eqref{ineq:mainIneq} and because $1-h(s)\geq 0$ we achieve
		\begin{align*}
		\frac{1-h(s)+nG(x)\int_{s}^{1}h(r)dr}{1-\exp(-nG(x))}
		&\geq \frac{1-h(s)-(1-s_0)\frac{h'(s)}{h(s)}\int_{s}^{1}h(r)dr}{1-\exp(-nG(x))} \\
		&\geq \frac{(1-s_0)(1-h(s)-\frac{h'(s)}{h(s)}\int_{s}^{1}h(r)dr)}{1-\exp(-nG(x))}
		\end{align*}
		By applying Inequality \eqref{ineq:upperbound} we get
		\begin{align*}
		\frac{1-h(s)+nG(x)\int_{s}^{1}h(r)dr}{1-\exp(-nG(x))}	
		&\geq\frac{(1-s_0)(1-h(s)-\frac{h'(s)}{h(s)}\int_{s}^{1}h(r)dr)}{1-\exp(\frac{h'(s)}{h(s)(1-s_0)}))} \enspace .
		\end{align*}
		By applying Lemma \ref{lemma:auxiliary} to the denominator we have:
		\begin{align*}
		\frac{1-h(s)+nG(x)\int_{s}^{1}h(r)dr}{1-\exp(-nG(x))}
		&\geq(1-s_0)^2.\frac{1-h(s)-\frac{h'(s)}{h(s)}\int_{s}^{1}h(r)dr}{1-\exp(h'(s)/h(s)))} \enspace .
		\end{align*}
		From the third property of $\alpha$-strong functions, the fraction at the right hand side of the above inequality is at least $\alpha$. Moreover, since $s_0\leq 0.5\epsilon$ it holds that $(1-s_0)^2\geq(1-\epsilon)$, and thus Inequality \eqref{ineq:mainIneq} holds and the proof of the lemma is complete.
	\end{proof}
	
\subsection{Proof of  Lemma \ref{lemma:aw}}

\begin{proof}
	Let us first take a look at the first three derivatives of $A(w)$ which are all continuous and bounded in range $[0,\tan(a)]$:
	\begin{align*}
		& A'(w)=2\alpha a (1-\exp(-a w))(1-\alpha+\alpha a w + \alpha \exp(-a w))-2w, \\
		& A''(w)=2\alpha a^2 \exp(-aw)(1-3\alpha+\alpha a w + 2\alpha \exp(-aw))+2(\alpha^2 a^2-1), \\
		& A'''(w)=-2\alpha a^3\exp(-aw)(1-4\alpha+\alpha a w+4\alpha\exp(-aw))\enspace .
	\end{align*}
	
	In this part of the proof we frequently use one of the implications of intermediate value theorem: if $f(x)$ and $f'(x)$ are two continuous and bounded functions, then there exists a root of $f'(x)$ between every two roots of $f(x)$. This also implies that the number of the roots of $f(x)$ is at most one plus the number of the roots of $f'(x)$.

	\begin{figure}[h!]
		\begin{center}
			\includegraphics[scale=0.15]{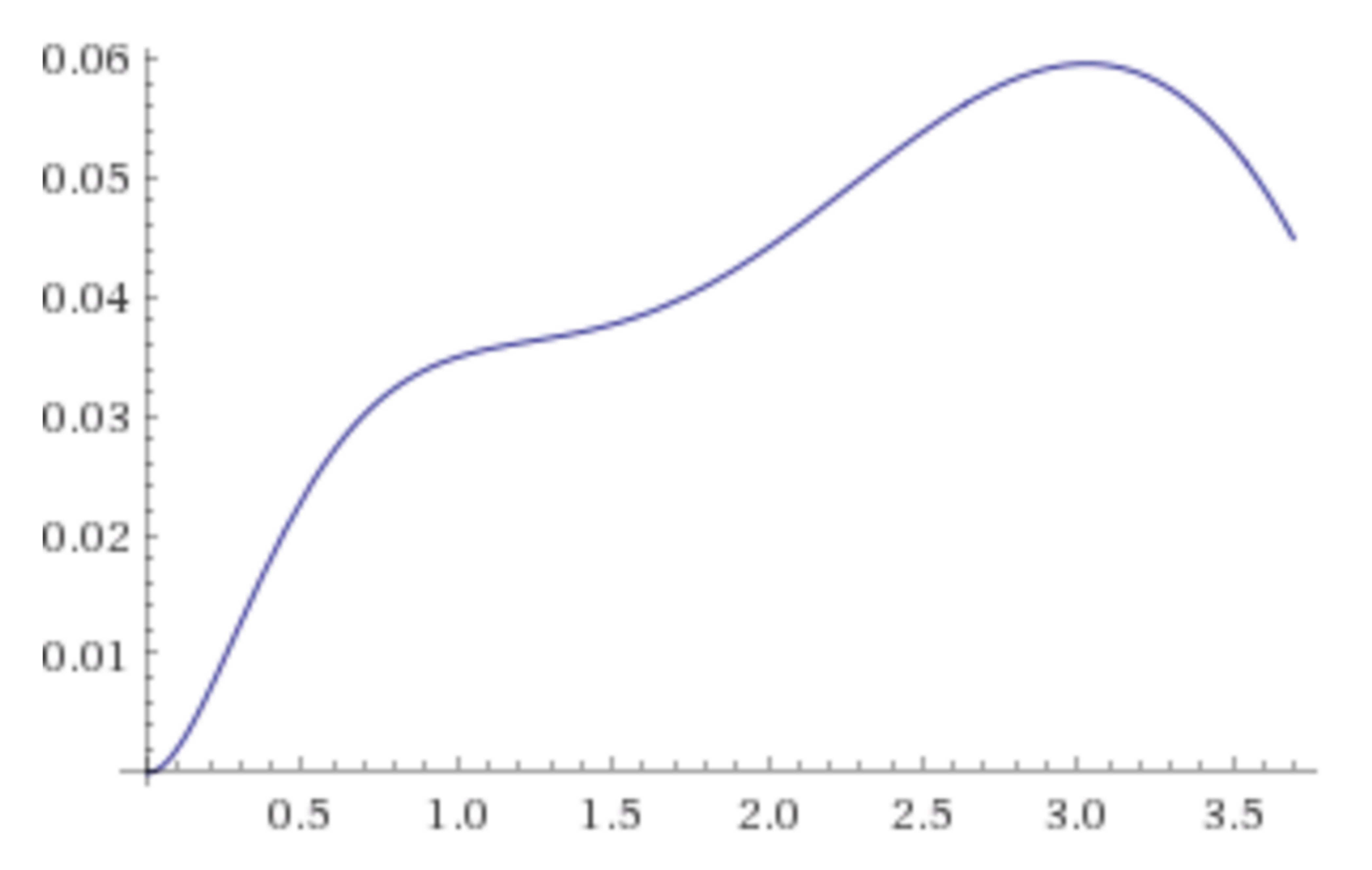}
		\end{center}
		\caption{\small{The plot shows function $A(w)$ for values of $w$ from $0$ to $\tan(a)\approx 3.7$.}}
		\label{fig:fig1}
	\end{figure}
	
	We claim that $A'''(w)$ has at most two roots. The reason for this is because $-2\alpha a^3\exp(-aw)$ is always non-zero, and $1-4\alpha+\alpha a w+4\alpha\exp(-aw)$ has at most two roots, because its derivative, $\alpha a(1-4\exp(-aw))$ has exactly one root, which is $\ln(4)/a$. 
		
	The fact that $A'''(w)$ has at most two roots implies that $A''(w)$ has at most three roots, which are $w_1\approx 0.28157$, $w_2\approx 1.24251$, and $w_3\approx 2.27082$. We note that $A'(w)$ is positive at all these points. Therefore $A'(w)$ has at most two roots, because otherwise there would be a point in which $A'(w)\leq 0$ and $A''(w)=0$ which is impossible.
		
	Note that $A'(0)=0$, therefore $A'(w)$ has at most one root in $\mathcal{R}^+$. Now we note that $A(0^+)>0$ because $A'(0)=0$ and $A''(0)=2(\alpha a^2(1-\alpha)+\alpha^2 a^2-1)>0$. Also $A(\tan(a))>0$. Now if $A(w)<0$ for some $0<w<\tan(a)$, then $A(w)$ would have at least two roots in range $(0,\tan(a))$ which results in $A'(w)$ having two roots in $\mathcal{R}^+$. Since this is not true, we have $A(w)\geq 0$ for every $0\leq w\leq \tan(a)$.
\end{proof}	

\subsection{Proof of Lemma \ref{lemma:partition}}

\begin{proof} 
	Let $X_1,\ldots,X_n$ be random variables representing the items, and let $Y_1,\ldots,Y_m$ be iid random variables with distribution function $F(x)=\prod_{i=1}^{k}F_i(x)$. For the expectation of the maximum of these variables we have:
	\begin{align}
	\E[\max_{i=1}^{k} Y_i]
	&= \int_{0}^{\infty} Pr[\max_{i=1}^{k} Y_i\geq x] dx \nonumber \\
	&= \int_{0}^{\infty} \bigg(1-\prod_{i=1}^{m}F(x)\bigg) dx \nonumber \\
	&= \int_{0}^{\infty} \bigg(1-\prod_{i=1}^{m}\prod_{j=1}^{k}F_j(x)\bigg) dx \nonumber \\
	&= \int_{0}^{\infty} \bigg(1-\prod_{i=1}^{n}F_i(x)\bigg) dx \nonumber \\
	&= \int_{0}^{\infty} Pr[\max_{i=1}^{n} X_i\geq x] dx \nonumber \\
	&= \E[\max_{i=1}^{n}X_i] \enspace .\label{ineq:optsAreEq}
	\end{align}
	This shows that the optimum solution is the same for both sets of items. Let $\tau_Y$ and $\tau_X$ be random variables that denotes the index of the picked items in $Y_1,\ldots,Y_m$ and $X_1\ldots,X_n$ respectively. Theorem \ref{theorem:iidResult} states that there exists some $s$ such for every $m\geq s$, we have $E[Y_{\tau_Y}] \geq (1-\epsilon/2)\alpha E[\max_{i=1}^{m}Y_i]$. In the following we show that there exist some $m_2$ such that for every $m\geq m_2$, $E[X_{\tau_X}]\geq (1-\epsilon/2)E[Y_{\tau_Y}]$. This proves the lemma for every $m\geq m_\epsilon=\max\{s,m_2\}$. In other words,
	\begin{align*}
		\E[X_{\tau_X}] &\geq (1-\epsilon/2)\E[Y_{\tau_Y}] \\ &\geq (1-\epsilon/2)^2\alpha \E[\max_{i=1}^{m}Y_i] \\ &\geq (1-\epsilon)\alpha \E[\max_{i=1}^{m}Y_i] \\ &=(1-\epsilon)\alpha \E[\max_{i=1}^{n}X_i]\enspace .
	\end{align*}
	
	Since $\E[Z]=\int_{0}^{\infty}Pr[Z\geq z]dz$ for every non-negative random variable $Z$, we show $Pr[X_{\tau_X}\geq x]\geq(1-\epsilon/2) Pr[Y_{\tau_Y}\geq x]$ for every $x\geq 0$ in order to prove $\E[X_{\tau_X}]\geq (1-\epsilon/2)\E[Y_{\tau_Y}]$.
	
	In the following, we use $G_i(x)$ to denote $1-F_i(x)$. For every non-negative $x$ we have:
	\begin{align}
	Pr[X_{\tau_X}\geq x]
	&= \sum_{i=1}^{n}Pr[X_{\tau_X}\geq x|\tau_X=i]Pr[\tau_X=i] \nonumber \\
	&= \sum_{i=1}^{n}\prod_{j=1}^{i-1}F_j(\theta_{\lceil j/k \rceil}) G_i(\max\{x,\theta_{\lceil i/k \rceil}\})\enspace .\nonumber
	\end{align}
	By rewriting the above sum with respect to the $m$ partitions we have:
	\begin{align}
	Pr[X_{\tau_X}\geq x]
	&= \sum_{i=0}^{m-1}\sum_{j=1}^{k}\prod_{l=1}^{ik+j-1}F_l(\theta_{\lceil l/k\rceil})G_{ik+j}(\max\{x,\theta_{i+1}\}) \nonumber \\
	&= \sum_{i=0}^{m-1}\sum_{j=1}^{k}\prod_{l=1}^{ik}F_l(\theta_{\lceil l/k\rceil})\prod_{p=1}^{j-1}F_{ik+p}(\theta_{i+1})G_{ik+j}(\max\{x,\theta_{i+1}\}) \nonumber \\
	&= \sum_{i=0}^{m-1}\prod_{l=1}^{ik}F_l(\theta_{\lceil l/k\rceil})\sum_{j=1}^{k}\prod_{p=1}^{j-1}F_{ik+p}(\theta_{i+1})G_{ik+j}(\max\{x,\theta_{i+1}\}) \enspace . \label{ineq:messy}
	\end{align}
	
	Note that $X_1,\ldots,X_n$ are $m$-partitioned, hence for every partition $0\leq i< m$ and $x\geq 0$ we have $\prod_{l=ik+1}^{ik+k}F_l(x)=F(x)$. Therefore $\prod_{l=1}^{ik}F_l(\theta_{\lceil l/k\rceil})=\prod_{l=1}^{i}F(\theta_{l})$. By this replacement, Inequality \eqref{ineq:messy} can be written as follows:
	\begin{align}
	Pr[X_{\tau_X}\geq x]
	&= \sum_{i=0}^{m-1}\prod_{l=1}^{i}F(\theta_l)\sum_{j=1}^{k}\prod_{p=1}^{j-1}F_{ik+p}(\theta_{i+1})G_{ik+j}(\max\{x,\theta_{i+1}\}) \enspace . \label{ineq:messy2}
	\end{align}
	
	Moreover, for every $0\leq i<m$ and $1\leq j\leq k$ we have:
	\begin{align}
	\prod_{p=1}^{j-1}F_{ik+p}(\theta_{i+1})
	&\geq \prod_{p=1}^{k}F_{ik+p}(\theta_{i+1}) & \text{every $F_t(x)$ is a most 1}\nonumber \\
	&= F(\theta_{i+1}) & \nonumber \\
	&= 1-G(\theta_{i+1}) & \nonumber \\
	&\geq 1-\frac{a\tan(a)}{m} & \text{Inequality \ref{ineq:useful} for } h(s)=\cos(as) \nonumber \\
	&\geq 1-\frac{\epsilon}{2} & \text{for every } m\geq m_2= \frac{2a\tan(a)}{\epsilon}\enspace.\label{ineq:prodSimple}
	\end{align}
	
	Inequality \eqref{ineq:prodSimple} shows that for a large enough $m$, the left hand side of the inequality becomes close enough to 1. By using this inequality in Inequality \eqref{ineq:messy2} we have:

	\begin{align}
	Pr[X_{\tau_X}\geq x] &\geq \sum_{i=0}^{m-1}\prod_{l=1}^{i}F(\theta_l)\sum_{j=1}^{k}(1-\frac{\epsilon}{2})G_{ik+j}(\max\{x,\theta_{i+1}\}) \nonumber \\
	&= (1-\frac{\epsilon}{2})\sum_{i=0}^{m-1}\prod_{l=1}^{i}F(\theta_l)\sum_{j=1}^{k}G_{ik+j}(\max\{x,\theta_{i+1}\}) \label{ineq:messy3} \enspace .
	\end{align}
	
	Let $r=\max\{x, \theta_{j+1}\}$. By multiplying every term in Inequality \eqref{ineq:messy3} by $\prod_{p=1}^{j-1}F_{ik+p}(r)$, which is less than or equal to 1, we have:	
	\begin{align}
	Pr[X_{\tau_X}\geq x]
	&\geq (1-\frac{\epsilon}{2})\sum_{i=0}^{m-1}\prod_{l=1}^{i}F(\theta_l)\sum_{j=1}^{k}G_{ik+j}(r) & \nonumber \\ 
	&\geq (1-\frac{\epsilon}{2})\sum_{i=0}^{m-1}\prod_{l=1}^{i}F(\theta_l)\sum_{j=1}^{k}\prod_{p=1}^{j-1}F_{ik+p}(r)G_{ik+j}(r) & \nonumber \\
	&=	(1-\frac{\epsilon}{2})\sum_{i=0}^{m-1}\prod_{l=1}^{i}F(\theta_l)\sum_{j=1}^{k}\prod_{p=1}^{j-1}F_{ik+p}(r)(1-F_{ik+j}(r)) & \nonumber \\
	&=	(1-\frac{\epsilon}{2})\sum_{i=0}^{m-1}\prod_{l=1}^{i}F(\theta_l)\sum_{j=1}^{k}\bigg(\prod_{p=1}^{j-1}F_{ik+p}(r)-\prod_{p=1}^{j}F_{ik+p}(r)\bigg) \nonumber
	\end{align}
	
	Note that the inner sum forms a telescoping series, hence we can simplify it as follows:
	\begin{align}
	Pr[X_{\tau_X}\geq x]
	&\geq	(1-\frac{\epsilon}{2})\sum_{i=0}^{m-1}\prod_{l=1}^{i}F(\theta_l)\bigg(1-\prod_{p=1}^{k}F_{ik+p}(r)\bigg) & \nonumber \\
	&=	(1-\frac{\epsilon}{2})\sum_{i=0}^{m-1}\prod_{l=1}^{i}F(\theta_l)\bigg(1-F(r)\bigg) & \nonumber \\
	&= (1-\frac{\epsilon}{2})\sum_{i=0}^{m-1}\prod_{l=1}^{i}F(\theta_l)G(r) \enspace . & \label{ineq:messy4}
	\end{align}
	
	Note that $Pr[Y_{\tau_Y}\geq x]=\sum_{i=0}^{m-1}\prod_{l=1}^{i}F(\theta_l)G(\max\{x,\theta_{i+1}\})$. Using this in Inequality \eqref{ineq:messy4} results that $Pr[X_{\tau_X}\geq x]\geq (1-\epsilon/2)Pr[Y_{\tau_Y}\geq x]$, hence the proof is complete.
\end{proof}

\subsection{Proof of Lemma \ref{lemma:discard}}
\begin{proof} 
	Let $\rho$ and $\rho'$ be random variables that denote the index of the maximum with the smallest index amongst $X_1,\ldots,X_n$ and $X_{S_1},\ldots,X_{S_r}$, respectively. Then we have
	\begin{align*}
		\E[\max_{i\in S}X_i]
		&= \sum_{i\in S}\E[X_i|i=\rho']Pr[i=\rho'] \\
		&\geq \sum_{i\in S}\E[X_i|i=\rho]Pr[i=\rho] \\
		&\geq \frac{p}{k}\sum_{i=1}^{n}\E[X_i|i=\rho]Pr[i=\rho] \\
		&= \frac{p}{k}\E[\max_{1\leq i\leq n}X_i]
	\end{align*}
	which completes the proof.
\end{proof}

\subsection{Proof of Theorem \ref{theorem:hardness}}

\begin{proof} 
	Pick arbitrary numbers $m$ and $\epsilon$. Suppose we have $2m$ distribution. Each of the first $m$ distributions gives $1$ with probability $1$. Each of the last $m$ distributions gives $0$ with probability $(1-\epsilon)^{1/m}$ and $\frac{1}{\epsilon}$ otherwise. Notice that with probability $\Big((1-\epsilon)^{1/m}\Big)^m = 1-\epsilon$ all of the last $m$ items are $0$ and with probability $\epsilon$ at least one of the last $m$ items is $\frac 1 {\epsilon}$. Hence in expectation the optimum takes $$1\times(1-\epsilon)+\epsilon\times\frac 1 {\epsilon}=2-\epsilon.$$
	
	While any online algorithm takes at most $\max(1,\epsilon\times\frac 1 {\epsilon})=1.$ Therefore, the approximation factor of any online algorithm is upper-bounded by $$\frac{1}{2-\epsilon} =\frac{2+\epsilon}{4-\epsilon^2}\leq \frac {2+\epsilon}{4}\leq  0.5 + \epsilon.$$
\end{proof}

\end{document}